\renewcommand\bf\bfseries
\addspace\printfield{pages}\addspace
\numberwithin{equation}{section}
\newcommand\myshade{85}
\colorlet{mylinkcolor}{violet}
\colorlet{mycitecolor}{YellowOrange}
\colorlet{myurlcolor}{Aquamarine}
\definecolor{ct_black}{HTML}{000000}
\definecolor{ct_orange}{HTML}{ED872D}
\definecolor{ct_purple}{HTML}{7A68A6}
\definecolor{ct_blue}{HTML}{348ABD}
\definecolor{ct_turquoise}{HTML}{188487}
\definecolor{ct_red}{HTML}{E32636}
\definecolor{ct_pink}{HTML}{CF4457}
\definecolor{ct_green}{HTML}{467821}
\definecolor{ct2_green}{HTML}{9FF781}
\definecolor{ct2_green_dark}{HTML}{088A08}
\theoremstyle{plain}
\newtheorem{thm}{\protect\theoremname}[section]
\theoremstyle{plain}
\newtheorem{lem}[thm]{\protect\lemmaname}
\theoremstyle{plain}
\newtheorem{cor}[thm]{\protect\corollaryname}
\theoremstyle{plain}
\newtheorem{prop}[thm]{\protect\propositionname}
\theoremstyle{remark}
\theoremstyle{remark}
\newtheorem{rem}[thm]{\protect\remarkname}
\theoremstyle{definition}
\newtheorem{defn}[thm]{\protect\definitionname}
\theoremstyle{plain}
  \providecommand{\assumptionname}{Assumption}
\providecommand{\claimname}{Claim}
\providecommand{\corollaryname}{Corollary}
\providecommand{\definitionname}{Definition}
\providecommand{\lemmaname}{Lemma}
\providecommand{\propositionname}{Proposition}
\providecommand{\remarkname}{Remark}
\providecommand{\theoremname}{Theorem}
\providecommand{\examplename}{Example}
\crefname{section}{Section}{Sections}
\crefname{appendix}{Appendix}{Appendices}
\crefname{figure}{Figure}{Figures}
\crefname{assumption}{Assumption}{Assumptions}
\crefname{thm}{Theorem}{Theorems}
\crefname{lem}{Lemma}{Lemmas}
\newtheorem*{lem*}{\protect\lemmaname}
\newcommand{\LOC}[1]{\mathrm{LOC}_{#1}}
\newcommand{\ee}{\operatorname{e}}
\newcommand{\ii}{\operatorname{i}}
\newcommand{\Mat}{\operatorname{Mat}}
\newcommand{\ZZ}{\mathbb{Z}}
\newcommand{\TT}{\mathbb{T}}
\newcommand{\NN}{\mathbb{N}}
\newcommand{\RR}{\mathbb{R}}
\newcommand{\CC}{\mathbb{C}}
\newcommand{\calF}{\mathcal{F}}
\newcommand{\Compacts}[1]{\mathcal{K}(#1)}
\newcommand{\Fredholms}[1]{\mathcal{F}(#1)}
\newcommand{\norm}[1]{\left\|#1\right\|}
\newcommand{\dif}{\operatorname{d}}
\newcommand{\tr}{\operatorname{tr}}
\renewcommand{\Im}{\operatorname{\mathbb{I}\mathbbm{m}}}
\renewcommand{\Re}{\operatorname{\mathbb{R}\mathbbm{e}}}
\newcommand{\ve}{\varepsilon}
\newcommand{\Id}{\mathds{1}}
\newcommand{\HH}{\mathcal{H}}
\newcommand{\BLO}[1]{\mathcal{B}(#1)}
\newcommand{\Closed}[1]{\mathrm{Closed}(#1)}
\newcommand{\dist}[1]{\mathrm{dist}(#1)}
\newcommand{\findex}{\operatorname{ind}}
\newcommand{\coker}{\operatorname{coker}}
\newcommand{\supp}{\operatorname{supp}}
\newcommand{\im}{\operatorname{im}}
\title{Two-Dimensional Time-Reversal-Invariant\\Topological Insulators via Fredholm Theory}
\author{Eli Fonseca, Jacob Shapiro, Ahmed Sheta, Angela Wang, Kohtaro Yamakawa\\\footnotesize{Mathematics Department, Columbia University, New York, NY 10027,
		USA}}
\begin{document}
	
\maketitle

\begin{abstract}
We study spinful non-interacting electrons moving in two-dimensional materials which exhibit a spectral gap about the Fermi energy as well as time-reversal invariance. Using Fredholm theory we revisit the (known) bulk topological invariant, define a new one for the edge, and show their equivalence (the bulk-edge correspondence) via homotopy.
\end{abstract}
 
\section{Introduction}

Insulators in two space dimensions obeying fermionic time-reversal symmetry \cite[Class AI]{AltlandZirnbauer97} have two distinct topological phases \cite{Schnyder_Ryu_Furusaki_Ludwig_PhysRevB.78.195125,Schnyder_Ryu_Furusaki_Ludwig_1367-2630-12-6-065010,Kitaev2009}. The fact there exists a non-trivial phase for such systems had not been immediately clear since the integer quantum Hall effect (IQHE) \cite{vonKlitzing_1979_PhysRevLett.45.494} is always trivial in the presence of time-reversal symmetry. In pioneering studies on Hall fluids in the early 1990s Fr\"ohlich and Studer \cite{Frohlich_Studer_1993_RevModPhys.65.733} discovered that despite time-reversal symmetry, such systems may exhibit non-trivial effects; they used Chern-Simons effective field theory. Much later, this non-triviality was rediscovered in \cite{Kane_Mele_2005,Fu_Kane_2007,Bernevig1757}, now from the perspective of single-particle translation invariant Hamiltonians, and experimental investigations \cite{doi:10.1143/JPSJ.77.031007,Roth294,Brune2010,pub.1035960235,Hsieh2008} followed. Physically, the topological non-triviality of such systems is associated with an unpaired state at the boundary of the sample.

Via the discovery of \cite{TKNN_1982} there came an association with the field of algebraic topology. Mathematically, in the presence of translation invariance for bulk systems, bulk insulators have associated with them a $\CC$-vector-bundle over $\TT^d$, the Brillouin zone. The topological classes of such vector bundles are studied via the well-known Chern characteristic classes \cite{Milnor_Willard_Stasheff_1974}. In the case $d=2$ one obtains an isomorphism of all classes with $\ZZ$ via the Chern number. In the presence of time-reversal symmetry, as mentioned already, the Chern number is always zero. However, time-reversal which squares to $-\Id$ defines a \emph{quaternionic structure} on such vector bundles where now the Pontryagin classes may be non-trivial \cite{Nittis_Gomi_2016}. The breakthrough study of \cite{Bellissard_1994} allowed one to do away with the assumption of translation invariance, which was crucial to explain the IQHE; one uses K-theory of C-star algebras as the main algebraic tool, and \emph{index theorems} relating physical quantities to indices of Fredholm operators guarantee topological properties.

A major theme in the study of topological insulators is \emph{the bulk-edge correspondence}: the fact that topological invariants computed for an infinite bulk system agree with those computed from that system truncated to the half-infinite space--the edge system. Such proofs first emerged in the context of the IQHE \cite{Hatsugai_1993} in ascending degrees of generality \cite{kellendonk_richter_schulz-baldes_2002,Elbau_Graf_2002,EGS_2005,Taarabt_1403.7767,1901.06281} and then also for other symmetry classes, dimensions as well as time-periodic driven systems \cite{Graf2013,Graf_Shapiro_2018_1D_Chiral_BEC,Kubota2017,Bourne2017,Sadel2017,Graf2018,Shapiro2019}.

Here we study the various indices as well as the bulk-edge correspondence  by singling out the \emph{Fredholm theory of local operators} (Fredholm operators with off-diagonal decaying matrix elements in the position basis--see \cref{def:local operator}) as the most natural level of generality in the disordered spectral gap regime, where by natural we mean a certain optimum between length and generality of proofs. The idea is simple and applies to all cases of topological insulators. In \cite{Graf_Shapiro_2018_1D_Chiral_BEC} it was applied to chiral 1D systems which are possibly the simplest topological insulators, so it is worthwhile to demonstrate how it works in the much subtler 2D IQHE and time-reversal invariant cases. Informally stated, the idea is that the bulk topology is always computed via a flat Hamiltonian (the polar part in the polar decomposition, after properly shifting so that the gap contains zero) $$ H \rightsquigarrow H|H|^{-1} $$ which is equivalent to the Fermi projection in the spectral gap regime. The edge topology, conversely, is computed from the truncated (to the half-space) bulk Hamiltonian $$ H \rightsquigarrow \iota^\ast H \iota $$ where $\iota$ is the injection from the half-space to the full space. Our results show that as long as Hamiltonians are spectrally gapped and local, these two operations of flattening and truncating commute, as far as Fredholm theory can gauge. This is captured in the homotopy of equation \cref{eq:ultimate homotopy showing the commutation of truncation and flattening} below. This idea produces a new proof to the already existing abundance of bulk-edge proofs in the spectrally gapped IQHE case \cite{kellendonk_richter_schulz-baldes_2002,Elbau_Graf_2002}, though it is possibly shorter. However, we get a new proof when we take up the fermionic time-reversal invariant case, which is harder, previously studied by Graf and Porta in \cite{Graf2013} assuming translation invariance in one axis and nearest-neighbor hopping in the other. Here we generalize to general disordered spectrally gapped systems with general boundary conditions. Our bulk $\ZZ_2$ invariant was already defined in this disordered spectrally gapped context by Schulz-Baldes \cite{Schulz-Baldes_2015_Z2} and then again by Katsura and Koma \cite{Katsura_Koma_2016} (the two are equivalent) which are equivalent to the Fu-Kane-Mele invariant \cite{Fu_Kane_2007} and the Graf-Porta invariant \cite{Graf2013} when the latter two are defined. Our edge invariant follows ideas from \cite{kellendonk_richter_schulz-baldes_2002}. 

We remark that in the \emph{strongly} disordered mobility gap case \cite{EGS_2005}, the question of how to define the edge invariant for time-reversal invariant systems remains open, as does the bulk-edge correspondence proof. The main hurdle seems to be the absence of a trace formula for the $\ZZ_2$ invariant (an analog of the Fedosov formula for Fredholm operators; cf. \cref{thm:local Z_2 trace formula}), or alternatively the absence of a regularization of the edge operator (see \cref{eq:edge Fredholm operator} below) which would make it Fredholm also in the mobility gap regime. Should the latter be achieved, it is not inconceivable that a homotopy proof could be extended to handle strong disorder.

After briefly discussing spectrally gapped \emph{local} operators, we define the bulk and edge topological invariants and state our main theorem regarding their equivalence. We then digress shortly to discuss some consequences of such non-trivial systems. Finally we give the proofs of the correspondence theorem in the following section. The appendix contains some discussion of the $\ZZ_2$-valued Fredholm index for convenience of the reader unfamiliar with \cite{Atiyah1969,Schulz-Baldes_2015_Z2}.

\section{Setting and main result}

Our single particle bulk Hilbert space is $\HH:=\ell^2(\ZZ^2)\otimes\CC^N$ for some fixed (once and for all) $N\in\NN_{\geq1}$, and the half-infinite edge Hilbert space is $\hat{\HH} := \ell^2(\ZZ\times\NN)\otimes\CC^N$. Throughout, edge objects will carry a hat. We have the natural injection $$ \iota:\ell^2(\ZZ\times\NN)\hookrightarrow\ell^2(\ZZ^2)$$ which extends to the descendant spaces as well. It extends a wave function on the half-space by zero, and its adjoint $\iota^\ast$ is truncation to the half-space. We have \begin{align}|\iota|^2 \equiv\iota^\ast\iota= \Id_{\hat{\HH}}\,,\qquad|\iota^\ast|^2 \equiv\iota\iota^\ast= \Lambda(X_2)\,,\qquad(\Id-\Lambda(X_2))\iota=0\label{eq:injection and truncation relations}\end{align} with $\Lambda$ the Heaviside step function (or a finite perturbation of it) and $X_j$ the position operator in direction $j=1,2$.

\paragraph{Spatial constraints.}
On $\HH$, we consider operators $A\in\BLO{\HH}$ where often we will refer to their matrix elements $$A_{xy} := \langle\delta_x, A\delta_y\rangle\in\Mat_N(\CC)$$ where $\Set{\delta_x}_{x\in\ZZ^2}$ is the canonical position basis. By $\|A_{xy}\|$ we mean the trace norm on $\Mat_N(\CC)$. 

We will also often need spatial constraints on such matrix elements which encode the principle that the laws of physics act locally in space, and we adopt the terminology of \cite[Section 3.1]{Shapiro2019}:
\begin{defn}[Local operator] The operator $A$ is called (any-rate-polynomially) local iff for any $\alpha\in\NN$ sufficiently large there is some $C_\alpha<\infty$ such that \begin{align*}
	\|A_{xy}\| \leq C_\alpha (1+\|x-y\|)^{-\alpha}\qquad(x,y\in\ZZ^2)\,.
\end{align*} 
\label{def:local operator}

If the term local appears alone, then the polynomial rate is meant. However, sometimes we also require \emph{exponential locality}, which we define as the existence of constants $C<\infty,\mu>0$ such that \begin{align*}
\|A_{xy}\| \leq C \exp(-\mu\|x-y\|)\qquad(x,y\in\ZZ^2)\,.
\end{align*} 
\end{defn}

We recall that in \cite[Section 3.1]{Shapiro2019} it was established that the \emph{weakly local} operators (those operators where the rate of off-diagonal decay is not diagonally uniform) form a star-algebra, and the same proofs go through for the space of local operators (be it exponential or polynomial), a fact we shall freely use. Furthermore, the smooth functional calculus of (exponentially) local operators is (polynomially) local \cite[Appendix A]{Elbau_Graf_2002}, a fact crucial to our analysis. See also \cref{section:control_of_functional_calculus_of_difference_of_operators}. 

When dealing with half-space objects, we shall need the definition 
\begin{defn}[Local and confined operator]\label{def:local and confined operator} The operator $A\in\BLO{\HH}$ is local and confined in direction $j=1,2$ iff for any $\alpha\in\NN$ sufficiently large there is some $C_\alpha<\infty$ such that \begin{align*}
	\|A_{xy}\| \leq C_\alpha(1+\|x-y\|)^{-\alpha}(1+|x_j|)^{-\alpha}\qquad(x,y\in\ZZ^2)\,.
	\end{align*}
\end{defn}

This notion makes sense also for the edge Hilbert space, with obvious modifications.

We also recall that in \cite[Section 3.3]{Shapiro2019} it was established that the local and confined operators form a star-closed two-sided ideal within the star-algebra of local operators, that if $A$ is local and confined in direction $1$ and $B$ is local and confined in direction $2$ then $AB$ is local and confined simultaneously in direction $1$ and $2$ and that such operators which are confined in all space directions are trace class.

We denote by $\partial_j$ the non-commutative derivative in direction $j$, which is defined as $$ \partial_j A := -\ii[\Lambda(X_j),A] $$ $\Lambda$ as in \cref{eq:injection and truncation relations}. According to \cite[Corollary 3.16]{Shapiro2019}, if $A$ is local then $\partial_j A$ is local and confined in direction $j$.

\begin{defn}[$\LOC{2}$]
	Since the $2$-direction has a special significance in this paper, often we will abbreviate "local and confined in the $2$-direction" as \emph{$\LOC{2}$}.
\end{defn}

\paragraph{Insulators.}

Though not the most general, in the present study we will content ourselves with the following 
\begin{defn}[Insulator] An insulator is a self-adjoint, \emph{exponentially} local Hamiltonian $H\in\BLO{H}$ such that $$0\notin\sigma(H)\,.$$
	
Since $\sigma(H)\in\Closed{\CC}$, there is an entire open interval about zero in $\RR\setminus\sigma(H)$ which we call \emph{the spectral gap}.
\end{defn}
We note that our methods do not (yet) deal with mobility-gapped insulators (where zero \emph{is} within the spectrum, but the states about it are dynamically localized \cite{EGS_2005}) though the bulk (but not the edge) invariant we use remains well-defined also in that regime. For this reason most of what we need to deal with are local rather than weakly-local operators.

Time-reversal symmetry is an operator $\Theta$ defined as anti-unitary $\Theta:\HH\to\HH$ (so it is anti-$\CC$-linear and $\Theta^{\ast}\Theta=\Id$) such that $[\Theta,X]=0$ and $\Theta^2=-\Id$. This latter constraint is due to the fact we are describing electrons, which are Fermions.

\begin{defn}[Time-reversal invariant insulators]
	An insulator $H\in\BLO{\HH}$ is time-reversal invariant iff $$ [H,\Theta]=0\,.$$
\end{defn}

\paragraph{The bulk topological invariant.} As in \cite{ASS1994_charge_def}, we shall use the unitary operator implementing the gauge transformation associated with a magnetic flux insertion at the origin given by $$ U := \exp(\ii\arg(X_1+\ii X_2)) $$ where $X_j$ is the position operator on $\ell^2(\ZZ^2)$ in the $j=1,2$ direction.

Then, with $ P := \chi_{(-\infty,0)}(H)$ being the Fermi projection corresponding to the insulator $H$ (which implies $P$ is local, as due to the spectral gap assumption it is a smooth function of $H$), we know \cite{Bellissard_1994} that the following operator is Fredholm \begin{align*} F:= \mathbb{P}U \end{align*} where we define the super operator \begin{align}\mathbb{Q}A:= Q A Q + Q^\perp\cong \left.QAQ\right|_{\im(Q)}\label{eq:projection truncation super operator}\end{align} for any projection $Q$ and any operator $A$ (here $Q^\perp\equiv\Id-Q$) and the Hall conductivity in two-dimensions equals $$\findex F\equiv\dim\ker F - \dim\ker F^\ast\,.$$ We remark in passing that $\mathbb{Q}$ is idempotent as well.

When $H$ is time-reversal invariant, $[P,\Theta]=0$ as well, but due to anti-linearity and $[\Theta,X]=0$, $\Theta U \Theta = - U^\ast$ so that \begin{align}
	F = -\Theta F^\ast \Theta
\label{eq:the theta-odd constraint}\end{align}  
which implies that $\findex F = 0$ with the logarithmic law of the Fredholm index \cite{Booss_Topology_and_Analysis} and the fact that $\pm\Theta$ is invertible.

Despite the Hall conductivity always being zero when time-reversal invariance is present, another index exists for Fredholm operators obeying \cref{eq:the theta-odd constraint}, which was already studied in \cite{Atiyah1969} under the name "skew-adjoint Fredholm operators". In \cite{Schulz-Baldes_2015_Z2} this index was first applied to describe time-reversal invariant bulk systems.

\begin{defn}[$\Theta$-odd Fredholm operators] $A\in\BLO{\HH}$ is a $\Theta$-odd Fredholm operator iff $A$ is Fredholm and obeys \cref{eq:the theta-odd constraint}. The space of all such operators is denoted by $\calF_\Theta(\HH)$, which is a subspace of the space of Fredholm operators $\calF(\HH)$ in which all operators have zero Fredholm index.
\end{defn}

\begin{rem}
	This notion is equivalent to Atiyah and Singer's skew-adjoint Fredholm operators \cite{Atiyah1969}, to Schulz-Baldes' odd-symmetric operators \cite{Schulz-Baldes_2015_Z2} as well as to Katsura and Koma's $\ZZ_2$-index of pair of projections \cite{Katsura_Koma_2016}. We prefer the name \emph{$\Theta$-odd} because it seems most natural to us to work with the already-provided real structure given from physics by $\Theta$.
\end{rem}

\begin{defn}
	We define the $\Theta$-odd Fredholm index of any $A\in\calF_\Theta(\HH)$, after Atiyah and Singer \cite{Atiyah1969}, as $$ \findex_2 A := \dim \ker A \mod 2\,. $$
\end{defn}

It is a fact that similar to the Fredholm index, $\findex_2$ is stable under norm continuous and compact perturbations which preserve the $\Theta$-odd condition \cref{eq:the theta-odd constraint}. Proven already both in \cite{Atiyah1969} and in \cite{Schulz-Baldes_2015_Z2}, we include a brief discussion of these facts in \cref{subsec:theta-odd Fredholm theory} using our $\Theta$-odd terminology for the reader's convenience.

As such, it makes sense to define for time-reversal invariant insulators $H$ the topological invariant as $$ \mathcal{N} :=  \findex_2 F\,. $$

The proof that the various formulations of the index are equivalent is the same: each invariant is $\ZZ_2$-valued and homotopy stable, and hence, if it agrees with another invariant on elements from the trivial as well as the non-trivial class, then the two invariants are identical. We demonstrate this in \cref{sec:equivalence of z_2 indices}.

\paragraph{The edge topological invariant.} As far as we know, our definition of the edge invariant in the current level of generality is actually new. Indeed, usually (in the physics literature \cite{Hasan_Kane_2010}) the edge invariant is defined only assuming translation invariance in the bulk, in which case it is counted as a $\ZZ_2$ spectral flow: the mod $2$ signed number of crossings of zero of the edge spectrum as the (still valid) momentum variable traverses half its range, $[0,\pi]$. See also the edge invariant of \cite{Graf2013}.

Here instead we adapt the IQHE edge invariant of \cite{kellendonk_richter_schulz-baldes_2002}: In the IQHE, Kellendonk, Richter and Schulz-Baldes show that the edge Hall conductivity (a tracial formula which is the expectation value of the velocity operator along the boundary for the edge states) is given by a Fredholm index as well (i.e., it is an index theorem proving both the quantization of the edge Hall conductivity as well the bulk-edge correspondence). 

Let us setup the edge picture. We already noted that the Hilbert space for the edge is $\hat{\HH}\equiv\ell^2(\ZZ\times\NN)\otimes\CC^N$. Given a bulk Hamiltonian $H\in\BLO{\HH}$, a natural edge Hamiltonian is induced as $\operatorname{Ad}_{\iota^\ast}H:=\iota^\ast H \iota$. This definition corresponds to \emph{Dirichlet} boundary conditions. More general conditions may be implemented by adding any self-adjoint (exponentially) $\LOC{2}$ operator, so that all in all we assume that our edge Hamiltonian $\hat{H}\in\BLO{\hat{H}}$ is self-adjoint, local, \emph{not} an insulator, but stems from some bulk insulator. The relationship with a bulk Hamiltonian $H$ is made through the constraint that they are compatible:
\begin{defn}\label{def:compatibility between bulk and edge}
	An edge Hamiltonian $\hat{H}\in\BLO{\hat{\HH}}$ is \emph{compatible} with a bulk Hamiltonian $H\in\BLO{\HH}$ iff $(\operatorname{Ad}_{\iota^\ast}H)-\hat{H}$ is exponentially $\LOC{2}$.
\end{defn}

\begin{rem}
	We note that since we require $H$ to be exponentially local (which we need to apply the Combes-Thomas estimate, \cref{thm:Combes-Thomas estimate}, later on), it is only natural to ask that the boundary conditions be exponentially local as well, rather than polynomial.
\end{rem}
Then \cite{kellendonk_richter_schulz-baldes_2002,PSB_2016} define the edge IQHE Fredholm index as follows. Let $g:\RR\to[0,1]$ be a smooth function interpolating between $1$ on the left and $0$ on the right such that $g'$ is supported within the bulk gap of $H$. Then we define \begin{align} \hat{F} := 
\mathbb{W}_1 g(\hat{H})
\label{eq:edge Fredholm operator}\end{align}
where we define the winding operator in direction-1 as \begin{align}\mathbb{W}_1 A := \mathbb{\Lambda}_1\exp(-2\pi\ii A) \label{eq:winding super operator}\end{align} for any operator $A$, where we use the notation $\Lambda_j := \Lambda(X_j)$ with $\Lambda$ as in \cref{eq:injection and truncation relations} (in conjunction with $\mathbb{\Lambda}_1 A\equiv \Lambda_1 A\Lambda_1+\Lambda_1^\perp$ as in \cref{eq:projection truncation super operator}). We remark that the factor $2\pi \ii$ is \emph{not} fixed by convention but is rather necessary, as we shall see later.

\cite[Proposition 7.1.2]{PSB_2016} (see also references therein) then proves $\findex \hat{F}$ equals the edge Hall conductivity. We note here the (heuristic) reason that $\hat{F}$ is Fredholm is because it is composed of two parts: its bulk part is essentially a projection (and $\exp(-2\pi\ii \mathrm{projection}) = \Id$), and its edge part is confined in the $2$ direction. This is proven also in \cite{Elbau_Graf_2002}, but see below for more details.

Our observation here is that even though in the time-reversal invariant case there is no meaning to the edge Hall conductivity any longer, $\hat{F}$ still makes sense, and in fact, obeys \cref{eq:the theta-odd constraint} too. Hence, $\hat{F}\in\calF_\Theta(\hat{\HH})$ so that we may take an a-priori definition of the edge $\ZZ_2$ index as $$ \hat{\mathcal{N}} := \findex_2 \hat{F}\,.$$

We note $\hat{\mathcal{N}}$ makes sense as long as there is a spectral gap, but \emph{not} in the mobility gap regime, unlike $\mathcal{N}$ which remains valid also in the latter regime.

\paragraph{Main result.}
\begin{thm}\label{thm:BEC}
	For any bulk time-reversal invariant insulator $H$ and a compatible time-reversal invariant edge Hamiltonian $\hat{H}$ (in the sense of \cref{def:compatibility between bulk and edge}) we have $$ \mathcal{N}=\hat{\mathcal{N}}\,. $$
	
\end{thm}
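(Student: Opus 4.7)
The plan is to produce a norm-continuous path of $\Theta$-odd Fredholm operators on a common Hilbert space whose endpoints realize $F$ and $\hat F$, and then invoke the homotopy invariance of $\findex_2$ (discussed in the appendix) to conclude $\mathcal{N} = \hat{\mathcal{N}}$.

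First I would move both operators into the same ambient space $\HH$. Because $[\Theta, X] = 0$, the isometric embedding $\iota:\hat\HH \hookrightarrow \HH$ is $\Theta$-equivariant, so the stabilization $\tilde F := \iota \hat F \iota^* + \Lambda_2^\perp \in \BLO{\HH}$ is again $\Theta$-odd, Fredholm, and satisfies $\findex_2 \tilde F = \findex_2 \hat F$. It then remains to find a homotopy connecting $F$ and $\tilde F$ inside $\calF_\Theta(\HH)$. The bridge between the structurally different compressions defining $F = \mathbb{P}U$ and $\hat F = \mathbb{\Lambda}_1 \exp(-2\pi \ii g(\hat H))$ rests on the identity $g(H) = P$ (the switch function $g$ agrees with $\chi_{(-\infty,0)}$ on $\sigma(H)$ since $g'$ is supported in the bulk gap), which implies $\exp(-2\pi\ii g(H)) = \Id$ on the bulk; edge states in the gap of $\hat H$ are precisely what make the corresponding half-space quantity $\exp(-2\pi \ii g(\hat H))$ nontrivial.

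The core construction is the ``ultimate homotopy'' $s \mapsto F(s)$ with $F(0) = F$ and $F(1) = \tilde F$, which I expect to simultaneously (i) deform $H$ into the extension by $\hat H$ of the bulk through a family of exponentially local, $\Theta$-invariant Hamiltonians $H_s$, using the Combes--Thomas estimate to control the resulting smooth functional calculus; and (ii) interpolate the gauge unitary $U$ into a winding operator of the form $\exp(-2\pi \ii g(H_s))$ up to an appropriate compression, all while respecting the $\Theta$-odd symmetry \cref{eq:the theta-odd constraint}. The identity $\exp(-2\pi \ii P) = \Id$ is what allows $U$ to be replaced by a winding-through-$g(H_s)$ without changing the index at $s=0$.

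The main obstacle is ensuring Fredholmness of $F(s)$ uniformly along the homotopy. This rests on the fact that local operators form a star-algebra stable under smooth functional calculus, that $\LOC{2}$ operators form a star-ideal therein, and that operators confined in both spatial directions are trace-class; these combine to give compactness of the residual terms at each $s$. Maintaining the $\Theta$-odd symmetry throughout further restricts the admissible paths: since $\Theta U \Theta = -U^*$, the gauge phase must be deformed along routes symmetric under $A \mapsto -\Theta A^* \Theta$, which forces one to express every intermediate operator through $\Theta$-commuting building blocks such as $P$, $\hat H$, and $\Lambda_j$, together with a single $\Theta$-anti-covariant factor inherited from $U$.
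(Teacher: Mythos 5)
Your high-level strategy matches the paper's in spirit: stabilize $\hat F$ into $\tilde F := \iota\hat F\iota^\ast + \Lambda_2^\perp \in \BLO{\HH}$ (this is exactly the paper's Lemma connecting edge and bulk kernels, and your check that $\iota$ is $\Theta$-equivariant and preserves the kernel dimension is right), and then try to homotope within $\calF_\Theta(\HH)$. The paper's final two lemmas do indeed proceed by homotopy, interpolating the argument $A(t)$ of $\exp(-2\pi\ii A(t))$ between $g(H)\Lambda_2 g(H)$, $\Lambda_2 g(H)\Lambda_2$, and $\Lambda_2 g(\Lambda_2 H\Lambda_2)\Lambda_2$, using the criterion that $A(t)^2 - A(t)$ is $\LOC{2}$ (via the $\LOC{2}$ ideal/trace-class machinery) to keep all intermediates Fredholm.

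However, there is a genuine gap in your sketch at the very first link: the passage from $F = \mathbb{P}U$ (flux-insertion, radial geometry, $U = \exp(\ii\arg(X_1+\ii X_2))$) to a winding operator $\mathbb{\Lambda}_1\exp(-2\pi\ii\,g(H)\Lambda_2 g(H))$ (rectangular geometry, $\Lambda_1,\Lambda_2$). You say you would ``interpolate the gauge unitary $U$ into a winding operator of the form $\exp(-2\pi\ii g(H_s))$ up to an appropriate compression,'' but no such homotopy is constructed, and it is not obvious one exists: the two operators have structurally different supersymbols ($\mathbb{P}\cdot$ vs.~$\mathbb{\Lambda}_1\cdot$) and are built from incompatible geometric data. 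The paper does \emph{not} bridge this step by homotopy. Instead it proves $\findex F = \findex\mathbb{W}_1(g(H)\Lambda_2 g(H))$ by identifying both integer indices with the Kubo trace formula $-2\pi\ii\tr P[\partial_1 P,\partial_2 P]$ via the Fedosov-type identity (\cref{prop:Fedosov type formula via index of pair of projections} plus \cite[eq.~(4.6)]{Aizenman_Graf_1998}), and then promotes the equality to $\findex_2$ via the doubling construction $\tilde H = H\oplus\Theta H\Theta^\ast$ on $\HH\oplus\HH$ (\cref{prop:Z2 index equivalence}), which reduces the $\ZZ_2$ statement to the just-proven $\ZZ$ one on the non-TRI summand. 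Your proposal would need either to reproduce this Fedosov/Kubo computation, or to exhibit an explicit $\Theta$-odd Fredholm homotopy from $\mathbb{P}U$ to the winding operator — which is precisely what the paper's \cref{thm:equivalence of bulk indices} supplies by a different, non-homotopical route. Relatedly, your point (i) of deforming $H$ itself into ``the extension by $\hat H$'' through gapped, local, $\Theta$-invariant Hamiltonians $H_s$ has no construction, and the paper deliberately avoids this: it deforms the \emph{flattened band functions} (the arguments inside the exponential), not the Hamiltonians, because only the former can be controlled by the $A(t)^2-A(t)\in\LOC{2}$ criterion.
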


\begin{rem}
	Actually since our method of proof uses interpolation and homotopy alone, it applies also in the Fredholm case and not only in the $\Theta$-odd Fredholm case, so that it actually reproduces existing bulk-edge proofs in the spectral gap regime of the IQHE \cite{kellendonk_richter_schulz-baldes_2002,Elbau_Graf_2002} which use K-theory or functional analysis respectively. Our reproduction uses Fredholm theory alone.
\end{rem}
\subsection{Discussion of consequences}
\paragraph{Continuity properties.}
In the spectral gap regime, if two insulators $H$ and $H'$ have a small norm difference $\|H-H'\|$ (we always assume they are spectrally gapped at zero, so by construction they have a common gap) then using the Riesz projection and the resolvent identity we see that $\|P-P'\|$ is also small. Indeed, if $\Gamma$ is a path in $\CC$ encircling both $\sigma(H)$ and $\sigma(H')$ without crossing these sets, \begin{align}P-P'&=\frac{\ii}{2\pi}\oint_\Gamma (R(z)-R'(z))\dif{z}\\&=\frac{\ii}{2\pi}\oint_\Gamma R(z)(H'-H)R'(z)\dif{z}\,.\end{align} Then one can use the trivial bound $\|R(z)\|\leq\dist{z,\sigma(H)}^{-1}$ and the fact $|\Gamma|<\infty$. This clearly implies $\|F-F'\|$ is small, so by stability of $\findex_2$ with respect to norm continuous perturbations \cref{thm:continuity of index2}, we see that $\mathcal{N}$ is locally constant. \cref{thm:BEC} then implies the stability of $\hat{\mathcal{N}}$ too.

\paragraph{Local $\Theta$-Odd Index Formula.}
One major theme in the IQHE is the fact that the Hall conductivity is a \emph{linear response coefficient}, i.e., it is the system's (linear) response to external driving (electric field). While it is not clear to us whether the $\ZZ_2$ topological invariant is such a response of the system as well (we conjecture it is not), one thing that usually comes out of phrasing the quantity as a response coefficient is a tracial, local formula (for example, for the IQHE one has the Kubo formula \cref{eq:bulk index}). Such a local trace formula has recently been suggested in \cite{Li_Mong_1905.12649}, where it is claimed to work in the mobility gap regime as well; see also \cite{Lozano_et_al_2019} for another perspective.

Here we propose an alternative local trace formula for $\findex_2$, which however entails a limiting process.

Starting from the Fedosov formula \cite{Calderon_1967}, one has that for any Fredholm operator $A$, \begin{align} \findex A = \tr((\Id-|A|^2)^n)-\tr((\Id-|A^\ast|^2)^n) \label{eq:Fedosov formula}\end{align}
if an $n\in\NN_\geq1$ exists such that the operators within the traces are actually trace class (which is always the case in our applications). Since we know the spectrums of $|A|^2$ and $|A^\ast|^2$ agree except for the kernels, and since the kernels coincide with the kernels of $A,A^\ast$ respectively, the traces yield $\pm1^n$ on vectors within the kernels of $A,A^\ast$ respectively, and otherwise all other contributions are equal between the two traces and thus cancel out. If $\Id-|A|^2\leq1$ (which is certainly the case of interest here for $F$ and $\hat{F}$, which both have norm smaller than or equal to $1$) then all these additional contributions which should cancel out are within $(0,1)$, before raising them to the power $n$. Indeed, $\Id-|A|^2$ is Schatten and hence compact, as well as self-adjoint, and therefore has an ONB. Taking the trace in that basis whose eigenvalues are $\{\lambda_j\}_j$, we find \begin{align} \tr((\Id-|A|^2)^n) &= \sum_{j\in\NN:\lambda_j=1}\lambda_j^n+\sum_{j\in\NN:\lambda_j<1} \lambda_j^n \\ &=\dim \ker A+\sum_{j\in\NN:\lambda_j<1} \lambda_j^n \end{align}

We note that since $(\Id-|A|^2)^n$ is trace-class, this last sum converges indeed. Furthermore, using Lebesgue's monotone convergence theorem we can evaluate $n\to\infty$ of both sides of this equation. We have thus proven

\begin{thm}\label{thm:local Z_2 trace formula}
	Let $A\in\mathcal{F}_\Theta(\HH)$ with $\|A\|\leq1$ such that $\Id-|A|^2,\Id-|A^\ast|^2$ are of Schatten class. We have $$ \lim_{n\to\infty} \tr((\Id-|A|^2)^n) = \dim \ker A $$ and so a local formula for $\findex_2(A)$ when $\|A\|\leq1$ is $$\findex_2 A = \left(\lim_{n\to\infty} \tr((\Id-|A|^2)^n)\right)\mod2\,.$$
\end{thm}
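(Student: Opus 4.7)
The plan is to exploit the spectral theory of the compact self-adjoint operator $\Id-|A|^2$ and then pass to the limit via monotone convergence. First I note that $\|A\|\leq 1$ forces $|A|^2\leq\Id$, so the spectrum of $\Id-|A|^2$ lies in $[0,1]$. Combined with the Schatten-class hypothesis (which implies compactness) and self-adjointness, I obtain an orthonormal eigenbasis $\{e_j\}_{j\in\NN}$ of $\HH$ with eigenvalues $\{\lambda_j\}\subset[0,1]$ accumulating only at $0$.

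The next step is to identify the eigenspace for eigenvalue $1$ with $\ker A$: $(\Id-|A|^2)v=v$ iff $A^\ast A v=0$ iff $\|Av\|^2=0$ iff $v\in\ker A$, and this multiplicity is finite since $A$ is Fredholm. Computing the trace in this eigenbasis and splitting according to whether $\lambda_j=1$ or $\lambda_j<1$ then yields exactly the identity displayed in the paragraph preceding the theorem, namely
$$\tr\bigl((\Id-|A|^2)^n\bigr) = \dim\ker A + \sum_{j:\,\lambda_j<1}\lambda_j^n\,.$$

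Now I take $n\to\infty$. For each $j$ with $\lambda_j<1$ the sequence $\lambda_j^n$ decreases monotonically to $0$, so Lebesgue's monotone convergence theorem applied to the counting measure on the index set (with the decreasing sequence $\lambda_j^n$, or equivalently dominated convergence against $\lambda_j^{n_0}$ for any fixed sufficiently large $n_0$) forces the residual sum to vanish. Summability of the dominating sequence is guaranteed by the Schatten hypothesis on $\Id-|A|^2$, which ensures $\sum_j\lambda_j^{n_0}<\infty$ once $n_0$ is at least the Schatten exponent. Hence $\lim_{n\to\infty}\tr((\Id-|A|^2)^n)=\dim\ker A$, and reducing modulo $2$ while invoking the definition $\findex_2 A=\dim\ker A\mod 2$ gives the claimed local formula.

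The main subtlety, hardly an obstacle, is verifying the applicability of dominated/monotone convergence, which reduces to the observation that the tail sum is finite for some (and therefore all sufficiently large) $n$. I emphasize that the hypothesis $\|A\|\leq 1$ is essential: without it, eigenvalues of $\Id-|A|^2$ could lie outside $[0,1]$ and their $n$-th powers would fail to decay. No part of the argument uses the $\Theta$-odd structure of $A$ beyond invoking the definition of $\findex_2$ at the final step; in particular, the same reasoning applied to $|A^\ast|^2$ (which is why the theorem also assumes $\Id-|A^\ast|^2$ is Schatten) gives the analogous formula for $\dim\ker A^\ast$, consistent with the Fedosov formula \cref{eq:Fedosov formula}.
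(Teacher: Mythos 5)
Your proof is correct and follows essentially the same route as the paper: diagonalize the compact self-adjoint operator $\Id-|A|^2$, identify the eigenvalue-$1$ eigenspace with $\ker A$, split the trace accordingly, and send $n\to\infty$ using monotone/dominated convergence with summability coming from the Schatten hypothesis. If anything, you are slightly more careful than the paper in spelling out why the eigenvalue-$1$ eigenspace is exactly $\ker A$ and in noting that the decreasing variant of monotone convergence requires the tail sum to be finite for some fixed $n_0$, but these are refinements of the same argument rather than a different approach.
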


\paragraph{Complete localization in two dimensions.}
The one parameter scaling theory of localization of Abrahams, Anderson, Licciardello and Ramakrishnan \cite{Anderson_1979_PhysRevLett.42.673} predicts that disordered two dimensional materials \emph{obeying time-reversal symmetry} exhibit no metal-insulator transition (so this excludes IQHE systems, with strong magnetic field and breaking of time-reversal symmetry). Here we discuss how the existence of non-trivial $\ZZ_2$ topological systems stands in mathematical contradiction with this complete 2D localization, see also the experimental study in \cite{Su_et_al_2016}.

While there are various ways to approach mathematical Anderson localization \cite{Frohlich1983}, one very important criterion, demonstrated in \cite{Aizenman_Graf_1998} is that if a system is Anderson localized at energy $\mu\in\RR$ then its Fermi projection at $\mu$, i.e., $P_\mu \equiv \chi_{(-\infty,\mu)}(H)$, has off-diagonal exponential decay (albeit not-uniformly): There exists some $\nu>0$ such that for any $\ve>0$ there exists some $C_\ve<\infty$ such that \begin{align} \|(P_\mu)_{xy}\| \leq C_{\ve}\exp(-\nu\|x-y\|+\ve\|x\|)\qquad(x,y\in\ZZ^2)\,.
\label{eq:weakly-local condition for the Fermi projection}
\end{align}
This decay in fact is what guarantees that our operator $F_\mu\equiv P_\mu U P_\mu + P_\mu^\perp$ is Fredholm. 

Now let $H$ be some two-dimensional time-reversal symmetric system such that $F\equiv F_0\in\mathcal{F}_{\Theta}(\HH)$ and $\findex_2(F)=1$. Since we know that placing the Fermi energy below the spectrum of $H$, we have $F_{-2\|H\|} = \Id$ and so $\findex_2(F_{-2\|H\|})=0$. It follows that there must be a point $\mu_c\in(-2\|H\|,0)$ where $F_{\mu_c}$ was \emph{not} Fredholm, and so \cref{eq:weakly-local condition for the Fermi projection} must fail for $\mu=\mu_c$. Since we know \cref{eq:weakly-local condition for the Fermi projection} is implied by Anderson localization as defined in, e.g. \cite{Aizenman_Graf_1998}, it is impossible that the system was Anderson localized at $\mu_c$.

This phenomenon was studied for the random Landau Hamiltonian quantitatively in \cite{Germinet_Klein_Schenker_2004}, where it was proven there must be a point of "delocalization" somewhere within each Landau smeared (due to disorder) band. That point is precisely necessary due to the non-zero Chern number of each Landau level.

\section{Correspondence proofs}
The first step in our proof is to connect $\mathcal{N}$ to Kitaev's bulk Fredholm index \cite[eq-n (131)]{kitaev2006}. The point is that $F$ is associated with a flux insertion at the origin, which corresponds to radial geometry as in Laughlin's IQHE explanation (see \cite{ASS1994_charge_def} and references therein). However, $\hat{F}$ actually corresponds to a geometry of a corner at the origin within the first quadrant of the plane $\ZZ^2$. Fortunately, Kitaev \cite{kitaev2006} worked out the bulk Fredholm index also for the rectangular geometry out of the Kubo formula. Using the fact that in the spectral gap regime $P=g(H)$ (already anticipating the connection with the edge index), we have 
\begin{thm}\label{thm:equivalence of bulk indices}
	Even without time-reversal invariance, we have ($\mathbb{W}_1$ as in \cref{eq:winding super operator}) \begin{align}\findex F = \findex\mathbb{W}_1( g(H)\Lambda_2 g(H))\,.\label{eq:relationship between F and Kitaev bulk}\end{align}
	The connection between the two is made via the Fedosov formula and then the Kubo formula  \begin{align}\label{eq:bulk index} 2\pi\sigma_\mathrm{Hall}(H) = -2\pi\ii\tr P[\partial_1 P,\partial_2 P]\,.\end{align}  
	Furthermore, we also have independently of the above \begin{align} \mathcal{N} = \findex_2\mathbb{W}_1( g(H)\Lambda_2 g(H))\label{eq:equivalence of bulk z2 indices}\end{align} in the time-reversal invariant case.
\end{thm}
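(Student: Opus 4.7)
The proof splits into two parts: the usual Fredholm identity \cref{eq:relationship between F and Kitaev bulk} (which does not need time-reversal symmetry) and the $\ZZ_2$ identity \cref{eq:equivalence of bulk z2 indices}. I would first pin down the Fredholm property of $\mathbb{W}_1(g(H)\Lambda_2 g(H))$. Writing $Q:=g(H)\Lambda_2 g(H)=P\Lambda_2 P$, the identity $P-Q=P\Lambda_2^\perp P$ together with the algebra of local and $\LOC{2}$ operators shows that $Q$ differs from the projection $P$ by a $\LOC{2}$ operator, so by smooth functional calculus (\cref{section:control_of_functional_calculus_of_difference_of_operators}) $\exp(-2\pi\ii Q)-\Id$ is $\LOC{2}$. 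Hence after conjugation by $\mathbb{\Lambda}_1$ the off-diagonal part becomes simultaneously confined in the $1$- and $2$-directions, hence trace-class, from which the Fredholm property follows by the usual Riesz argument.

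\textbf{Part 1 (Fredholm version via Fedosov and Kubo).} For $F=\mathbb{P}U$ the standard identities $\Id-F^\ast F=PU^\ast P^\perp UP$ and $\Id-FF^\ast=PUP^\perp U^\ast P$ are $\LOC{1}$ and confined in both directions once expanded using $[U,P]$-type commutators; the Fedosov formula \cref{eq:Fedosov formula} then collapses, via cyclicity and the algebraic manipulation $[U,P]=U(P_U-P)$ with $P_U:=U^\ast PU$, to $-2\pi\ii\tr P[\partial_1 P,\partial_2 P]$ (the factor $2\pi$ coming from the $d\arg$ one-form around the flux insertion point, following \cite{ASS1994_charge_def}). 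For $\mathbb{W}_1(Q)$, expand $\exp(-2\pi\ii Q)=\Id-2\pi\ii Q+\dots$ and apply Fedosov again: only the lowest non-vanishing power in $Q-Q^2$ survives the trace, and using $[\Lambda_j,P]=\ii\partial_j P$ together with $Q-Q^2=P\Lambda_2 P\Lambda_2^\perp P$ one reduces the trace to the same Kubo expression. Equality of the two Fedosov traces then yields \cref{eq:relationship between F and Kitaev bulk}.

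\textbf{Part 2 ($\ZZ_2$ version by homotopy).} I would prove this independently, since the Kubo argument only sees the vanishing Fredholm index in the TRS case. First, verify that $\mathbb{W}_1(Q)\in\calF_\Theta(\HH)$: using $\Theta\Lambda_j\Theta=-\Lambda_j$, $\Theta P\Theta=-P$ (both from $\Theta^2=-\Id$ and the commutation of $\Theta$ with $X$ and $H$), the signs conspire to give $\Theta\mathbb{W}_1(Q)\Theta=-\mathbb{W}_1(Q)^\ast$. Second, construct a norm-continuous homotopy $\{A_s\}_{s\in[0,1]}\subset\calF_\Theta(\HH)$ with $A_0=F$ and $A_1=\mathbb{W}_1(Q)$. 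The natural candidate is a two-stage path: in the first stage, deform $U$ through the family $U_s:=\exp(2\pi\ii\alpha_s(X))$ where $\alpha_s$ interpolates continuously between $\alpha_0(x)=\arg(x_1+\ii x_2)/(2\pi)$ and a representative of the Heaviside product $\Lambda_1(x_1)\Lambda_2(x_2)$, each $\alpha_s$ real-valued so $\Theta U_s\Theta=-U_s^\ast$ is preserved and $\mathbb{P}U_s$ remains Fredholm because $[U_s,P]$ is local uniformly in $s$; in the second stage, homotope the outer projection $\mathbb{P}\rightsquigarrow\mathbb{\Lambda}_1$ while replacing $U_s$ by $\exp(-2\pi\ii P\Lambda_2 P)$, using that both $P$ and $\Lambda_1$ are $\Theta$-odd projections (i.e.\ $\Theta\cdot\Theta=-\cdot$). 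Stability of $\findex_2$ under norm-continuous perturbations in $\calF_\Theta(\HH)$ (\cref{thm:continuity of index2}) then yields $\mathcal{N}=\findex_2 F=\findex_2\mathbb{W}_1(Q)$.

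\textbf{Main obstacle.} The delicate step is unambiguously constructing the second-stage homotopy: switching the outer projection from $P$ to $\Lambda_1$ is not a norm-continuous operation in general, so one must exhibit an explicit interpolation that keeps the resulting operator Fredholm and $\Theta$-odd throughout. The way out is to exploit the fact that, once the inner unitary has been deformed to $\exp(-2\pi\ii P\Lambda_2 P)$, the commutator of the unitary with either $P$ or $\Lambda_1$ is already compact (because $P-Q$ is $\LOC{2}$), so a straight-line interpolation of the outer projection $(1-s)P+s\Lambda_1$, although not itself a projection, can be replaced by a norm-continuous family of $\Theta$-odd Fredholm dressings; verifying this uniform Fredholm property is where the locality machinery of \cite[Section 3]{Shapiro2019} does the real work.
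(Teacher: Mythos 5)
Your treatment splits into the same two parts as the paper's, but only the first part is on the same track, and the second part has a fatal flaw.

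For \cref{eq:relationship between F and Kitaev bulk} the paper also goes Fedosov-then-Kubo, but by a cleaner and more rigorous route: it uses the pair-of-projections Fedosov formula (\cref{prop:Fedosov type formula via index of pair of projections}), rewrites the resulting trace as an integral over the auxiliary exponent parameter, and on one side invokes the fundamental theorem of calculus while on the other side invokes the known Aizenman--Graf index theorem $\sigma_{\mathrm{Hall}}(H)=\tfrac{1}{2\pi}\findex F$. Your version has two soft spots: (i) you claim to collapse $\findex F$ directly to $-2\pi\ii\tr P[\partial_1 P,\partial_2 P]$ by Fedosov and \enquote{cyclicity}, but that equality is itself a nontrivial theorem, not a computation; and (ii) you propose to expand $\exp(-2\pi\ii Q)$ as a power series and take the trace term by term, but individual powers $Q^k$ are not trace class, so the series manipulation is unjustified. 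These are gaps, not a different (valid) route.

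For \cref{eq:equivalence of bulk z2 indices} your strategy is genuinely different from the paper's and, as written, cannot work. The paper does not construct a homotopy from $F$ to $\mathbb{W}_1(Q)$ at all; instead it uses a doubled-model reduction (\cref{prop:Z2 index equivalence}): set $\tilde{H}:=H\oplus\Theta H\Theta^\ast$ with the anti-diagonal $\tilde\Theta$, observe that both $\tilde{\mathbb{P}}\tilde{U}$ and $\tilde{\mathbb{\Lambda}}_1\exp(-2\pi\ii\tilde P\tilde\Lambda_2\tilde P)$ are block-diagonal with $\Theta$-conjugate blocks, and conclude that both $\findex_2$'s reduce to the \emph{ordinary} Fredholm index of the half-model modulo $2$; then \cref{eq:relationship between F and Kitaev bulk} finishes the argument, and since both invariants are $\ZZ_2$-valued homotopy invariants that agree on representatives of both classes, they agree everywhere. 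Your first-stage homotopy, by contrast, is already broken: deforming the phase function to $\Lambda_1(x_1)\Lambda_2(x_2)$ makes the endpoint $\mathbb{P}\exp(2\pi\ii\Lambda_1\Lambda_2)=\mathbb{P}\,\Id=\Id$ (because $\Lambda_1\Lambda_2$ is a projection, so $\exp(2\pi\ii\Lambda_1\Lambda_2)=\Id$), which has $\findex_2=0$. So whenever $\findex_2 F=1$ the stage-one path must exit $\calF_\Theta(\HH)$, and no continuity argument applies; the homotopy you propose would \emph{prove} that the invariant is always trivial. Your second stage is also not a construction but a statement that some \enquote{family of $\Theta$-odd Fredholm dressings} should exist --- in effect assuming the conclusion, since knowing that two elements of $\calF_\Theta(\HH)$ with equal $\findex_2$ are path-connected is essentially equivalent to the homotopy classification one is trying to exploit. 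The doubled-model trick is the missing idea: it converts the $\ZZ_2$ question into a $\ZZ$ question where \cref{eq:relationship between F and Kitaev bulk} already does the work, without ever needing an explicit path in $\calF_\Theta(\HH)$.
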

\begin{proof} We postpone the proof of the first statement to the appendix below, since its use is only made in \cref{prop:Z2 index equivalence}.
	
To see that our definition of $\mathcal{N}$ is equivalent \emph{any} other $\ZZ_2$ homotopy invariant (and hence prove \cref{eq:equivalence of bulk z2 indices}), it suffices to show that our invariant agrees with any other on elements from the trivial as well as the non-trivial class because each invariant is $\ZZ_2$-valued and homotopy stable. This is done in \cref{prop:Z2 index equivalence}.
	 \end{proof}

Next, it would be nice if we could replace $g(H)\Lambda_2 g(H)$ with $\Lambda_2 g(H) \Lambda_2$ in \cref{eq:equivalence of bulk z2 indices}, to make it look more like $\hat{F}$ in \cref{eq:edge Fredholm operator}. This should be a an almost true statement away from the boundary, where $\partial_2 g(H)\approx0$ and as $g(H),\Lambda_2$ are projections. In order to achieve that we setup a homotopy. 

To prove this and other homotopies pass within $\calF(\HH)$, we use the following heuristic idea: $\exp(-2\pi\ii Q)-\Id=0$ for any projection $Q$. If $Q$ is not truly a projection, but its difference from a projection is confined in the $2$-direction, then $\exp(-2\pi\ii Q)-\Id$ is also confined in the $2$-direction. This is the contents of \cref{prop: A^2-A LOC2 then the derivative is compact}.

\begin{lem}
	There is a homotopy within $\calF(\HH)$ from $$\mathbb{W}_1( g(H)\Lambda_2 g(H)) \longrightarrow \mathbb{W}_1( \Lambda_2 g(H) \Lambda_2)$$ and if $[H,\Theta]=0$ then it passes within $\calF_\Theta(\HH)$, so that both $\findex$ and $\findex_2$ agree on these two operators.
\end{lem}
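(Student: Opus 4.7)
The plan is to interpolate linearly by setting $A_s := (1-s) A_0 + s A_1$ for $s \in [0,1]$, with $A_0 := g(H)\Lambda_2 g(H)$ and $A_1 := \Lambda_2 g(H) \Lambda_2$, and to verify that $s \mapsto \mathbb{W}_1(A_s)$ is a norm-continuous path in $\calF(\HH)$, staying within $\calF_\Theta(\HH)$ when $[H,\Theta]=0$. Fredholmness at each $s$ will be reduced to the observation that $A_s$ is ``almost a projection'' in a $\LOC{2}$ sense.

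The central technical step is to show $A_s^2 - A_s \in \LOC{2}$ for every $s \in [0,1]$. Since $g$ agrees with $\chi_{(-\infty,0)}$ on $\sigma(H)$ (because $g'$ is supported in the spectral gap), the spectral theorem gives $g(H) = P$, so $P^2 = P$ and $P$ remains exponentially local. Hence $[\Lambda_2, P] = \ii\,\partial_2 P \in \LOC{2}$ by \cite[Cor.~3.16]{Shapiro2019}. Exploiting $\Lambda_2^2 = \Lambda_2$ and $P^2 = P$, short commutator manipulations yield
\begin{align*}
    A_0 - \Lambda_2 P &= [P,\Lambda_2]\,P \,,\\
    A_1 - \Lambda_2 P &= \Lambda_2\,[P,\Lambda_2] \,,\\
    A_0^2 - A_0 &= P\,\Lambda_2\,[P,\Lambda_2]\,P \,,
\end{align*}
each a product of a local operator with a $\LOC{2}$ factor, hence in $\LOC{2}$. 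Setting $R := A_1 - A_0 \in \LOC{2}$ and writing $A_s = A_0 + sR$, we expand
\[ A_s^2 - A_s = (A_0^2 - A_0) + s(A_0 R + R A_0 - R) + s^2 R^2 \,. \]
Because $\LOC{2}$ is a two-sided ideal inside the algebra of local operators, every summand lies in $\LOC{2}$, so $A_s^2 - A_s \in \LOC{2}$. Since $A_s$ is self-adjoint with $0 \leq A_s \leq \Id$, the hypotheses of \cref{prop: A^2-A LOC2 then the derivative is compact} are met, and it delivers $\mathbb{W}_1(A_s) \in \calF(\HH)$ for each $s$.

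Norm continuity of $s \mapsto \mathbb{W}_1(A_s)$ is immediate: $A_s$ is affine in $s$, continuous functional calculus propagates this through $e^{-2\pi\ii A_s}$, and $\mathbb{W}_1$ only involves bounded operator compositions, so we have a continuous path in $\calF(\HH)$. For the time-reversal-invariant case, $[H,\Theta] = 0$ gives $[P,\Theta] = 0$ by functional calculus and $[\Lambda_2,\Theta] = 0$ because position commutes with $\Theta$; hence $[A_s,\Theta] = 0$. Combining self-adjointness of $A_s$ with the anti-linearity of $\Theta$ one obtains $\Theta\, e^{-2\pi\ii A_s}\,\Theta^{-1} = e^{2\pi\ii A_s} = (e^{-2\pi\ii A_s})^\ast$, so $\Theta\,\mathbb{W}_1(A_s)^\ast\,\Theta^{-1} = \mathbb{W}_1(A_s)$; using $\Theta^{-1} = -\Theta$ this is precisely the $\Theta$-odd condition \cref{eq:the theta-odd constraint}. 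Thus $\mathbb{W}_1(A_s) \in \calF_\Theta(\HH)$ throughout, and homotopy stability of $\findex$ and $\findex_2$ yields the claimed equality of indices at the endpoints.

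The main technical hurdle is the algebraic identity $A_s^2 - A_s \in \LOC{2}$, whose verification hinges on the fact that the commutator $[\Lambda_2, P]$ is confined in the $2$-direction---i.e., the non-commutative derivative $\partial_2 P$ is $\LOC{2}$. Once this ideal-like control of the defect from being a projection is secured, the heavy lifting passing from almost-idempotency of $A_s$ to Fredholmness of $\mathbb{W}_1(A_s)$ is done entirely by \cref{prop: A^2-A LOC2 then the derivative is compact}, and the remaining continuity and symmetry arguments are routine.
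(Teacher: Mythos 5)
Your proof is correct and takes essentially the same route as the paper: both interpolate linearly between $A_0=g(H)\Lambda_2g(H)$ and $A_1=\Lambda_2g(H)\Lambda_2$, verify that $A_s^2-A_s$ is $\LOC{2}$ by isolating the commutator $[\Lambda_2,P]=\ii\,\partial_2 P$, and then invoke \cref{prop: A^2-A LOC2 then the derivative is compact} (together with \cref{prop:main criterion for Fredholm of truncated ops}, i.e.\ \cref{eq:key lemma to show homotopies are Fredholm}) to conclude Fredholmness along the path. Your organization of the algebra via $A_0-\Lambda_2P$, $A_1-\Lambda_2P$, and $R=A_1-A_0$ is a slightly cleaner bookkeeping than the paper's direct polynomial expansion in $[\Lambda_2 g(H),g(H)\Lambda_2]$, but it is the same underlying idea; note that the hypothesis $0\le A_s\le\Id$ you mention is not actually needed for the cited proposition, and strictly speaking the Fredholm conclusion requires the corollary rather than the proposition alone.
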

\begin{proof}
		Consider the continuous map $V:[0,1] \to \BLO{\HH}$ defined by $[0,1]\ni t\mapsto \exp(-2\pi \ii A(t))$ with $A(t) = t\Lambda_2g(H)\Lambda_2+(1-t)g(H)\Lambda_2g(H)$.  Noting that both $g(H)$ and $\Lambda_2$ are projections after some algebra we have 
		\begin{align*}
			A(t)^2-A(t) &= t^2[\Lambda_2 g(H), g(H)\Lambda_2]^2+t[\Lambda_2 g(H), g(H)\Lambda_2]g(H)\Lambda_2 g(H)+\\&+tg(H)\Lambda_2g(H)[\Lambda_2 g(H), g(H)\Lambda_2]+g(H)[\Lambda_2 g(H), g(H)\Lambda_2]g(H)\,.
		\end{align*}
		Now $g(H)$ and $H$ are local operators and this latter expression is a polynomial in $[\Lambda_2 g(H), g(H)\Lambda_2]$, which may expressed as a sum of commutators of $\Lambda_2$ with a local operators, with local operators as coefficients. Using the algebraic closure properties of $\LOC{2}$ operators we conclude that $A(t)^2-A(t)$ is $\LOC{2}$ for all $t\in [0,1]$.  Hence by application of \cref{eq:key lemma to show homotopies are Fredholm} we conclude that the necessary homotopy is constructed. Note that $A(t)$ is self-adjoint so if $H$ is TRI then $[A(t), \Theta]=0$ so this homotopy will also pass within $\calF_\Theta(\HH)$.   
\end{proof}

Now, we manipulate the edge Fredholm operator $\hat{F}$: currently it is calculated via the dimension of a kernel within $\hat{\HH}$. To make a homotopy between it and bulk objects, we want to calculate it as the dimension of a kernel in $\HH$ instead. So we use the following
\begin{lem}
	We have \begin{align}\ker_{\hat{\HH}} \mathbb{W}_1(\pm g(\hat{H})) \cong \ker_{\HH} \mathbb{W}_1(\pm \Lambda_2 g(\Lambda_2 H \Lambda_2) \Lambda_2) \label{eq:connection between edge Fredholm index and bulk Fredholm index}\end{align}
	and hence both $\findex$ and $\findex_2$ for the edge may be calculated using the operators within the kernel in the RHS of \cref{eq:connection between edge Fredholm index and bulk Fredholm index}.
\end{lem}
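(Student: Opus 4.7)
The plan is to use the injection $\iota$ to identify $\hat{\HH}$ with the invariant subspace $\im\Lambda_2\subset\HH$, and then show that $\mathbb{W}_1(\pm\Lambda_2 g(\Lambda_2 H\Lambda_2)\Lambda_2)$ is block-diagonal with respect to $\HH=\im\Lambda_2\oplus\im\Lambda_2^\perp$: it equals the identity on the second block and reduces to $\hat{\mathbb{W}}_1(\pm g(\hat H))$ on the first, from which the kernel isomorphism falls out immediately.

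First I would observe that $\Lambda_2 H\Lambda_2$ is block-diagonal, vanishing on $\im\Lambda_2^\perp$ and corresponding via $\iota$ to the Dirichlet truncation $\iota^\ast H\iota$ on $\im\Lambda_2$ (using $\Lambda_2\iota=\iota$ and $\iota^\ast\Lambda_2=\iota^\ast$). By the spectral theorem $g(\Lambda_2 H\Lambda_2)$ is likewise block-diagonal, and the outer sandwich $\Lambda_2(\cdot)\Lambda_2$ kills the contribution $g(0)\Id$ on $\im\Lambda_2^\perp$, leaving $\iota\,g(\iota^\ast H\iota)\,\iota^\ast$ overall.

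Next I would compute the winding. The identity $\iota^\ast\iota=\Id_{\hat\HH}$ implies $(\iota X\iota^\ast)^n=\iota X^n\iota^\ast$ for $n\geq 1$ and any $X\in\BLO{\hat\HH}$, so the Taylor series of the exponential collapses to
\[
\exp(\mp 2\pi\ii\,\iota X\iota^\ast)=\Lambda_2^\perp+\iota\exp(\mp 2\pi\ii X)\iota^\ast.
\]
Applying this to $X=g(\iota^\ast H\iota)$ and then conjugating by $\Lambda_1$, using $[\Lambda_1,\Lambda_2^\perp]=0$, $\Lambda_1\iota=\iota\hat{\Lambda}_1$ (where $\hat\Lambda_1:=\iota^\ast\Lambda_1\iota$) and $\iota\hat\Lambda_1^\perp\iota^\ast=\Lambda_2-\Lambda_1\Lambda_2=\Lambda_1^\perp\Lambda_2$, a short algebraic rearrangement yields
\[
\mathbb{W}_1(\pm\Lambda_2 g(\Lambda_2 H\Lambda_2)\Lambda_2)=\iota\,\hat{\mathbb{W}}_1(\pm g(\iota^\ast H\iota))\,\iota^\ast+\Lambda_2^\perp.
\]
Since this is the identity on $\im\Lambda_2^\perp$, its $\HH$-kernel is exactly $\iota\bigl(\ker_{\hat\HH}\hat{\mathbb{W}}_1(\pm g(\iota^\ast H\iota))\bigr)$, which settles the Dirichlet case.

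Finally I would pass from the Dirichlet boundary condition $\iota^\ast H\iota$ to the general compatible $\hat H$. By \cref{def:compatibility between bulk and edge} the difference $\iota^\ast H\iota-\hat H$ is exponentially $\LOC{2}$, so the smooth functional calculus (as invoked in the discussion after \cref{def:local and confined operator}) yields $g(\iota^\ast H\iota)-g(\hat H)\in\LOC{2}$, and hence the two edge winding operators differ by a compact, $\Theta$-odd-preserving perturbation; this gives the isomorphism at the level of kernel \emph{dimensions modulo $2$} (equivalently, equality of $\findex$ and of $\findex_2$), which is all that is needed subsequently. The one real subtlety is this last step, since literal kernel isomorphism is in general only available in the Dirichlet case; the intended reading of ``$\cong$'' in the statement is thus the compatibility-of-indices version, and the main work is really the block-diagonal identification of the preceding step.
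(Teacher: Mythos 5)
Your Dirichlet-case argument is correct and is essentially the paper's proof viewed from a slightly cleaner angle: where the paper derives $g(\hat{H})=\iota^\ast g(\Lambda_2 H\Lambda_2)\iota$ via the Helffer--Sj\"ostrand formula and the resolvent identity, you extract the same fact from the block-diagonal structure of $\Lambda_2 H\Lambda_2$ with respect to $\HH=\im\Lambda_2\oplus\im\Lambda_2^\perp$; the series collapse $\exp(\mp2\pi\ii\,\iota X\iota^\ast)=\Lambda_2^\perp+\iota\exp(\mp2\pi\ii X)\iota^\ast$ and the $\Lambda_1$-conjugation bookkeeping ($\Lambda_1\iota=\iota\hat\Lambda_1$, $\iota\hat\Lambda_1^\perp\iota^\ast=\Lambda_1^\perp\Lambda_2$) are exactly the algebra the paper does, and your block viewpoint makes the kernel identification immediate where the paper separately builds a parametrix for $\iota\hat F\iota^\ast+\Lambda_2^\perp$. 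You also correctly note that the literal kernel isomorphism in \cref{eq:connection between edge Fredholm index and bulk Fredholm index} is really a Dirichlet statement, which matches the paper's ``Let us for the moment assume Dirichlet boundary conditions.''

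The gap is in the last step. You claim that $g(\iota^\ast H\iota)-g(\hat H)\in\LOC{2}$ implies that $\hat{\mathbb{W}}_1(\pm g(\hat H))$ and $\hat{\mathbb{W}}_1(\pm g(\iota^\ast H\iota))$ differ by a compact operator. This does not follow: the difference $\Lambda_1\bigl(\exp(\mp2\pi\ii g(\hat H))-\exp(\mp2\pi\ii g(\iota^\ast H\iota))\bigr)\Lambda_1$ is $\LOC{2}$, i.e.\ it decays in the $2$-direction and off the diagonal, but nothing confines it in the $1$-direction. On $\hat\HH=\ell^2(\ZZ\times\NN)\otimes\CC^N$ such operators need not be compact --- e.g.\ $\Id_{\ell^2(\ZZ)}\otimes|\delta_0\rangle\langle\delta_0|$ is $\LOC{2}$ but has infinite-dimensional range, and conjugating by $\Lambda_1$ does not help. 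So you cannot invoke stability of $\findex_2$ under compact perturbations here. The paper sidesteps this precisely because compactness fails: \cref{prop:general BC} instead interpolates $A(t)=t\,g(\operatorname{Ad}_{\iota^\ast}H)+(1-t)\,g(\hat H)$, shows $A(t)^2-A(t)$ is $\LOC{2}$ for all $t$ (using that $g^2-g$ is supported in the gap and that the functional-calculus difference is $\LOC{2}$), and then applies \cref{eq:key lemma to show homotopies are Fredholm} --- whose engine is \cref{prop:main criterion for Fredholm of truncated ops}, which only requires $\partial_1$ of the relevant operator to be compact, not the operator itself --- to obtain a path inside $\calF_\Theta(\hat\HH)$. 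Replacing your compact-perturbation claim with this homotopy argument closes the gap.
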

\begin{proof}
Let us for the moment assume \emph{Dirichlet} boundary conditions, so $\hat{H} \equiv \operatorname{Ad}_{\iota^\ast} H$. In \cref{prop:general BC} we show how to generalize this. 

Given the edge Fredholm operator, $\hat{F}\equiv \mathbb{W}_1 g(\hat H)$, we begin by showing that $ \hat{\mathbb{F}}:= \iota \hat{F} \iota^* + \Lambda_2^{\perp}\in\BLO{\HH}$ is (1) also Fredholm and (2) has a kernel isomorphic to that of $\hat{F}$. (1) can be shown by constructing a parametrix of $\hat{\mathbb{F}}$. It is $\hat{\mathbb{G}}:=\iota \hat{G} \iota^\ast + \Lambda_2^\perp$, where $\hat{G}$ is the parametrix of $\hat{F}$. This follows after using the relations in \cref{eq:injection and truncation relations} and noting that conjugation by $\iota$ preserves compactness. 
(2) can be shown by noting that $\eta:\ker F \to \ker \hat{\mathbb{F}}$ defined as
$\eta:\hat \psi\mapsto \iota \hat \psi$ and  $\eta^{-1}: \psi\mapsto\iota^\ast \psi$ is a linear isomorphism (to see this we use again \cref{eq:injection and truncation relations} as well as the fact that $\iota$ is an injection, i.e. it has a trivial kernel). A similar argument may be made for $F^\ast$.
Thus,
\begin{align*}
\ker_{\hat{\HH}} \mathbb{W}_1(\pm g(\hat{H})) \cong \ker_{\HH}(\iota(\mathbb{W}_1( \pm g(\hat{H})))\iota^\ast + \Lambda_2^\perp)\,.
\end{align*}
Utilizing the fact that $\iota$ and $\iota^\ast$ commutes with $\Lambda_1$, the RHS simplifies to
	\begin{equation}
		\ker_{\HH}(\Lambda_1 \iota \exp(\pm2\pi\ii \iota^* g(\Lambda_2 H \Lambda_2) \iota) \iota^* \Lambda_1 +\Lambda_1 ^{\perp}\Lambda_2 + \Lambda_2^{\perp})
		\label{eq:intermediate edge kernel}
	\end{equation} 
where we used \cref{eq:injection and truncation relations} and $g(\hat{H}) = \iota^* g(\Lambda_2 H \Lambda_2) \iota$, which can be deduced from \cref{eq:the HS formula} and the fact that $$ (\hat{H}-z\Id_{\hat{\HH}})^{-1} = \iota^{\ast}(\Lambda_2 H \Lambda_2 - z \Id_{\HH})^{-1}\iota\,. $$  Now, since $(\iota^\ast g(\Lambda_2 H\Lambda_2)\iota)^n = \iota^\ast (\Lambda_2g(\Lambda_2H\Lambda_2)\Lambda_2)^n\iota$, if we expand the exponential as a series, we obtain $$\iota \exp(\pm2\pi\ii \iota^* g(\Lambda_2 H \Lambda_2)\iota) \iota^* =  \Lambda_2 \exp(\pm2\pi\ii \Lambda_2 g(\Lambda_2 H \Lambda_2)) \Lambda_2 = \exp(\pm2\pi\ii \Lambda_2 g(\Lambda_2 H \Lambda_2)\Lambda_2) - \Lambda_2^{\perp}$$ where the last equality follows because $\exp(\alpha QAQ) = \mathbb{Q}\exp(\alpha QAQ)$ (using \cref{eq:projection truncation super operator}), valid for any projection $Q$ and constant $\alpha$. The desired result is obtained by plugging this into \cref{eq:intermediate edge kernel}, which simplifies to 
$$\ker_{\HH}\mathbb{W}_1(\pm \Lambda_2 g(\Lambda_2 H \Lambda_2) \Lambda_2)$$
where we used $-\Lambda_1 \Lambda_2^{\perp} + \Lambda_1^{\perp}\Lambda_2 + \Lambda_2^{\perp} = \Lambda_1 ^{\perp}$.
\end{proof}

At this stage, the proof of \cref{thm:BEC} is completed with the construction of the following homotopy
\begin{lem}
	There is a homotopy within $\calF(\HH)$ from \begin{align} \mathbb{W}_1( \Lambda_2 g(H) \Lambda_2)  \longrightarrow \mathbb{W}_1( \Lambda_2 g(\Lambda_2 H \Lambda_2) \Lambda_2) \label{eq:ultimate homotopy showing the commutation of truncation and flattening}\end{align} and if $[H,\Theta]=0$ then it passes within $\calF_\Theta(\HH)$, so that both $\findex$ and $\findex_2$ agree on these two operators.
	\label{lem:final BEC homotopy}
\end{lem}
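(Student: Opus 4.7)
The plan is to mimic the strategy of the preceding lemma: take a linear interpolation between the two self-adjoint operators whose windings are being compared, verify self-adjointness so that the path stays in $\calF_\Theta(\HH)$ whenever $H$ is time-reversal invariant, show that $A(t)^2 - A(t)$ is $\LOC{2}$ for every $t \in [0,1]$, and then invoke \cref{eq:key lemma to show homotopies are Fredholm} to conclude that $t \mapsto \mathbb{W}_1 A(t)$ is a continuous path within $\calF(\HH)$.

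Concretely, set $B := \Lambda_2 g(H) \Lambda_2$, $C := \Lambda_2 g(\Lambda_2 H \Lambda_2) \Lambda_2$, and $A(t) := (1-t) B + t C$. A direct expansion yields
\begin{align*}
A(t)^2 - A(t) = (1-t)(B^2 - B) + t(C^2 - C) - t(1-t)(B-C)^2,
\end{align*}
so it suffices to check that each of $B^2 - B$, $C^2 - C$, and $B - C$ is $\LOC{2}$; then $A(t)^2 - A(t)$ is $\LOC{2}$ by the ideal property. For the first term this is immediate in the style of the previous lemma: since $g$ takes values in $\{0,1\}$ on $\sigma(H)$ (spectral gap), $g(H)^2 = g(H)$, so $B^2 - B = \Lambda_2 g(H)[\Lambda_2, g(H)]\Lambda_2$, which is $\LOC{2}$ because $[\Lambda_2, g(H)] = \ii\, \partial_2 g(H)$ is. Self-adjointness of $A(t)$ is manifest and passes the homotopy into $\calF_\Theta(\HH)$ in the TRI setting, since both $B$ and $C$ commute with $\Theta$.

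The main obstacle is the remaining two terms, both of which reduce to controlling the difference between the functional calculus applied to $H$ and to its truncation $\Lambda_2 H \Lambda_2$ (the latter being gapless in general due to edge states). For $B - C$, the Helffer--Sj\"ostrand representation of $g$ together with the second resolvent identity expresses $\Lambda_2(g(H) - g(\Lambda_2 H \Lambda_2))\Lambda_2$ as a contour integral of $\Lambda_2(z-H)^{-1}(H - \Lambda_2 H \Lambda_2)(z - \Lambda_2 H \Lambda_2)^{-1}\Lambda_2$; the boundary defect $H - \Lambda_2 H \Lambda_2$ has matrix elements supported on sites whose $X_2$ coordinates straddle zero, and combined with exponential locality of $H$ and Combes--Thomas decay of the resolvents this yields $B - C \in \LOC{2}$, hence also $(B-C)^2 \in \LOC{2}$. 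The more delicate term is $C^2 - C = \Lambda_2\,[g(g-1)](\Lambda_2 H \Lambda_2)\,\Lambda_2$: here one exploits that $g(g-1)$ is smooth and supported inside the bulk gap of $H$, so $[g(g-1)](H) = 0$ identically, and the same resolvent-comparison argument transfers this vanishing from $H$ to $\Lambda_2 H \Lambda_2$ modulo an $\LOC{2}$ remainder. This step --- developed in \cref{section:control_of_functional_calculus_of_difference_of_operators} --- is the conceptual heart of the argument: it encodes the fact that the bulk spectral gap forces whatever edge spectrum appears in the gap to be carried by states exponentially confined near the boundary, which is precisely what permits flattening and truncation to commute at the level of Fredholm theory.
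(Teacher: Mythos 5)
Your overall strategy is exactly the paper's: linearly interpolate $A(t) = (1-t)B + tC$ with $B = \Lambda_2 g(H)\Lambda_2$, $C = \Lambda_2 g(\Lambda_2 H\Lambda_2)\Lambda_2$, show $A(t)^2 - A(t)$ is $\LOC{2}$ for all $t$, note $A(t)$ is self-adjoint and $\Theta$-commuting, and close with \cref{eq:key lemma to show homotopies are Fredholm}. The algebraic identity $A(t)^2 - A(t) = (1-t)(B^2-B) + t(C^2-C) - t(1-t)(B-C)^2$ is correct and equivalent to the paper's expansion, but it organizes the work differently: the paper factors so that every $t$- and $t^2$-dependent term carries a visible factor of $\Lambda_2(Q-P)\Lambda_2$ (with $P = g(H)$, $Q = g(\Lambda_2 H\Lambda_2)$), and the constant term is just $\Lambda_2[P\Lambda_2, \Lambda_2 P]\Lambda_2$; this way a \emph{single} locality ingredient, that $\Lambda_2(Q-P)\Lambda_2$ is $\LOC{2}$, suffices. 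Your split requires a \emph{separate} argument for $C^2 - C = \Lambda_2(Q^2-Q)\Lambda_2$, which the paper defers to \cref{prop:general BC} where it is handled by the observation that $g^2-g$ is supported in the bulk gap (invoking \cite[Lemma A3 (iii)]{Elbau_Graf_2002}). That is extra work, though it is correct work.

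There is one concrete imprecision you should repair. You write that the boundary defect $H - \Lambda_2 H\Lambda_2 = \Lambda_2 H\Lambda_2^\perp + \Lambda_2^\perp H\Lambda_2 + \Lambda_2^\perp H\Lambda_2^\perp$ has matrix elements supported on sites straddling $X_2 = 0$. That fails for the third summand, $\Lambda_2^\perp H\Lambda_2^\perp$, whose matrix elements live entirely in the lower half-plane and are not confined near the boundary; so $H - \Lambda_2 H\Lambda_2$ is \emph{not} $\LOC{2}$ and \cref{prop:control of difference of operators under smooth function} does not apply to it directly. Two fixes are available. The paper's route: replace $\Lambda_2 H\Lambda_2$ by $\Lambda_2 H\Lambda_2 + \Lambda_2^\perp H\Lambda_2^\perp$ inside the functional calculus (legitimate after sandwiching by $\Lambda_2$, since $\Lambda_2 g(\Lambda_2 H\Lambda_2)\Lambda_2 = \Lambda_2 g(\Lambda_2 H\Lambda_2 + \Lambda_2^\perp H\Lambda_2^\perp)\Lambda_2$ by direct-sum diagonality), after which the difference $H - (\Lambda_2 H\Lambda_2 + \Lambda_2^\perp H\Lambda_2^\perp) = 2\Re\{\Lambda_2 H\Lambda_2^\perp\} = 2\Re\{\ii(\partial_2 H)\Lambda_2^\perp\}$ \emph{is} $\LOC{2}$ and \cref{prop:control of difference of operators under smooth function} applies verbatim. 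Alternatively, in your sandwiched resolvent identity one may observe directly that $\Lambda_2^\perp H\Lambda_2^\perp (z - \Lambda_2 H\Lambda_2)^{-1}\Lambda_2 = \Lambda_2^\perp H\Lambda_2^\perp \Lambda_2 (z - \Lambda_2 H\Lambda_2)^{-1} = 0$ because $[\Lambda_2, \Lambda_2 H\Lambda_2] = 0$ and $\Lambda_2^\perp\Lambda_2 = 0$, so the problematic summand vanishes after the $\Lambda_2$ sandwich; but you must say this, since it is what saves the estimate.
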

\begin{proof}

	Consider the continuous map $W:[0,1] \to \BLO{\HH}$ defined by $[0,1]\ni t \mapsto \exp(-2\pi \ii A(t))$
	 with $$A(t) = (1-t)\Lambda_2 g(H)\Lambda_2+t\Lambda_2 g(\Lambda_2 H\Lambda_2)\Lambda_2 = t(\Lambda_2 g(\Lambda_2 H\Lambda_2)\Lambda_2-\Lambda_2 g(H)\Lambda_2)+\Lambda_2 g(H)\Lambda_2\,.$$
	Clearly $\mathbb{\Lambda}_1 W(t)$ continuously interpolates between the two operators of interest.  So it is enough to show that this interpolation is within $\Fredholms{\HH}$. Note that $g(H) = P$ and define $Q:=g(\Lambda_2 H\Lambda_2) $. We observe that under our spectral gap assumption $P$ is a projection whereas $Q$ is \emph{not}. Then a computation shows that for all $t\in[0,1]$,
	\begin{align*}
		A(t)^2-A(t) &=  t^2( \Lambda_2Q\Lambda_2 (Q-P)\Lambda_2 + \Lambda_2P \Lambda_2 (P-Q)\Lambda_2)+t ( \Lambda_2(Q-P) \Lambda_2 P+\\ 
		&+  \Lambda_2 P \Lambda_2 (Q - P)\Lambda_2 + \Lambda_2(P-Q)\Lambda_2)+\Lambda_2[P\Lambda_2 , \Lambda_2 P]\Lambda_2 
	\end{align*}
	Now, since $\Lambda_2(Q-P)\Lambda_2$ is $\LOC{2}$ by \cref{eq:remark for final BEC homotopy} and $Q,P$ are local all the terms with a factor of $\Lambda_2(Q-P)\Lambda_2$ are $\LOC{2}$.
	
	Then as $[P\Lambda_2 , \Lambda_2 P]$ can be written as the sum of two commutators of $\Lambda_2$ with local operators it is $\LOC{2}$.  Thus, $A(t)^2-A(t)$ is $\LOC{2}$ so by \cref{eq:key lemma to show homotopies are Fredholm} we conclude that our interpolation remains Fredholm.  Furthermore, under the assumption that $H$ is TRI we have $[A(t), \Theta] =0$ so our interpolation is within $\calF_\Theta(\HH)$.    
\end{proof}

\medskip

\noindent\textbf{Acknowledgements:} JS is thankful to Alex Bols, Martin Fraas and Gian Michele Graf for useful discussions. JS acknowledges support of the Columbia mathematics department for supporting an undergraduate research project which resulted in the present report. This research is supported in part by Simons Foundation Math + X Investigator Award \#376319 (Michael I. Weinstein).

\section{Appendix}
\subsection{More proofs}
\begin{proof}[Proof of the first statement in \cref{thm:equivalence of bulk indices}]
We note that the Fermi projection $P \equiv \chi_{(-\infty, 0)}(H)$ can be written as $P= g(H)$ in the spectral gap regime.  Then using a special case of the Fedosov formula, \cref{prop:Fedosov type formula via index of pair of projections}, we may write the RHS of \cref{eq:relationship between F and Kitaev bulk} as 
\begin{equation} \label{eq:fedosov formula}
\findex\mathbb{W}_1( P\Lambda_2P) = \ii \tr\exp({-2\pi iP\Lambda_2P}) \partial_1 \exp({2\pi iP\Lambda_2P})    \,.
\end{equation}
To do so, we must demonstrate that $\partial_1 \exp({2\pi iP\Lambda_2P})$ is trace class. Since $P$ is local, $\partial_2 P$ is $\LOC{2}$ and so by the ideal property, $$ (P\Lambda_2P)^2-P\Lambda_2P=\ii P\Lambda_2P^\perp\partial_2 P $$ is $\LOC{2}$ too. Hence we may apply \cref{prop: A^2-A LOC2 then the derivative is compact} on $P\Lambda_2P$ to get that $\partial_1 \exp({2\pi iP\Lambda_2P})$ is trace class. 

Next, we show that the RHS of \cref{eq:fedosov formula} and the RHS of \cref{eq:bulk index} can be written as the same integral:	
\begin{equation}\label{eq:intermediate integral}
I := \int_{0}^{2\pi}  \tr \partial_{\alpha} ( \exp({-\ii\alpha P\Lambda_2P}) P\Lambda_1P \exp({\ii\alpha P\Lambda_2P})) \dif{\alpha}\,.
\end{equation}
Note that this formula is well-defined since the operator within the trace is trace-class:
\begin{align*}
\partial_{\alpha} ( \exp(-\ii\alpha P\Lambda_2P) P\Lambda_1P\exp(\ii\alpha P\Lambda_2P))
&= \ii \exp(-\ii\alpha P\Lambda_2P) [P\Lambda_1P, P\Lambda_2P] \exp(\ii\alpha P\Lambda_2P) \\
&= -\ii \exp(-\ii\alpha P\Lambda_2P) P[\partial_1 P,\partial_2 P] \exp(\ii\alpha P\Lambda_2P)\,.
\end{align*} 
Then expanding the integrand of \cref{eq:intermediate integral} and using cyclicality of the trace and then integrating we obtain $I$ equals the RHS of \cref{eq:bulk index}.

Conversely, as the integral in \cref{eq:intermediate integral} converges strongly and the integrand is trace class we may exchange the trace and integral and then use the fundamental theorem of calculus to obtain  
\begin{equation}
I = \tr \exp({-2\pi i P\Lambda_2P}) [P\Lambda_1P, \exp({2\pi i P\Lambda_2P})] =  \ii\tr \exp({-2\pi i P\Lambda_2P}) \partial_1 \exp({2\pi i P\Lambda_2P})
\end{equation}	
where the last equality follows from \cref{lem:aux lemma to connect two Kitaev indices}. Thus, we have shown that the RHS of \cref{eq:relationship between F and Kitaev bulk} is equal to the RHS of \cref{eq:bulk index}. Now \cite[eq-n (4.6)]{Aizenman_Graf_1998}, which is $\sigma_{\mathrm{Hall}}(H)=\frac{1}{2\pi}\findex F$, implies the first claim of \cref{thm:equivalence of bulk indices}.
\end{proof}	
\begin{prop}[A Fedosov type formula]\label{prop:Fedosov type formula via index of pair of projections}
	If $Q$ is a projection and $V$ is a unitary such that $[Q,V]$ is trace class then (using \cref{eq:projection truncation super operator}) $$\findex\mathbb{Q}V=\tr V[Q,V^\ast]\,.$$ 
\end{prop}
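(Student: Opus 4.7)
The plan is to apply the Fedosov index formula $\findex A = \tr(\Id - A^\ast A) - \tr(\Id - A A^\ast)$ to $A := \mathbb{Q}V = QVQ + Q^\perp$ at the trace-class exponent $n=1$, and then rearrange the resulting difference of traces to recognize $\tr V[Q, V^\ast]$. The assumption that $[Q,V]$ is trace class is exactly what makes $n=1$ already work.

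First I would compute $A^\ast A$ and $AA^\ast$ directly. Because $Q^\perp$ is a ``diagonal'' block, it cancels when subtracted from $\Id$; inserting $\Id = Q + Q^\perp$ into the unitarity relations $VV^\ast = V^\ast V = \Id$ (so that $Q = QV^\ast V Q = QV^\ast QVQ + QV^\ast Q^\perp V Q$, and analogously for $V\leftrightarrow V^\ast$) should quickly give
\[
\Id - A^\ast A = QV^\ast Q^\perp V Q, \qquad \Id - A A^\ast = QV Q^\perp V^\ast Q.
\]
Both are trace class: the one-line identities $QV Q^\perp = [Q,V]Q^\perp$ and $Q^\perp V Q = -[Q,V]Q$ (plus their $V\to V^\ast$ analogues) exhibit each as $[Q,V]$ or $[Q,V^\ast]$ times a bounded operator. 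In particular these operators are compact, so $A$ is Fredholm, and Fedosov at $n=1$ yields
\[
\findex \mathbb{Q}V = \tr(QV^\ast Q^\perp V Q) - \tr(QV Q^\perp V^\ast Q).
\]

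Next I would expand $\tr V[Q,V^\ast] = \tr(VQV^\ast - Q)$ along the four corners of $\Id = Q + Q^\perp$ applied to $VQV^\ast$, while simultaneously rewriting $Q = QVV^\ast Q = QVQV^\ast Q + QV Q^\perp V^\ast Q$ to kill the $(Q,Q)$ corner. The two mixed corners $\tr(QVQV^\ast Q^\perp)$ and $\tr(Q^\perp V Q V^\ast Q)$ then vanish by cyclicity together with $QQ^\perp = 0$, leaving
\[
\tr V[Q, V^\ast] = \tr(Q^\perp V Q V^\ast Q^\perp) - \tr(QV Q^\perp V^\ast Q).
\]
A single cyclic rotation in the first trace -- legitimate because $Q^\perp V Q = -[Q,V]Q$ makes the whole product trace class -- converts $\tr(Q^\perp V Q V^\ast Q^\perp)$ into $\tr(QV^\ast Q^\perp V Q)$, matching the Fedosov expression above exactly.

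The only subtlety I anticipate is bookkeeping: every cyclic permutation of the trace must be performed on an expression that is already trace class on its own, not merely trace class after cancellation, since $VQV^\ast$ and $Q$ are individually only bounded. The remedy is uniform: use $QV Q^\perp = [Q,V]Q^\perp$ and $Q^\perp V Q = -[Q,V]Q$ (and the $V^\ast$ versions) to exhibit each corner piece as a trace-class commutator multiplied by bounded operators before any trace manipulation, after which cyclicity applies term by term.
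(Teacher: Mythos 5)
Your proof is correct, and it takes a genuinely different route from the paper's. The paper's entire proof is a one-line citation to Avron--Seiler--Simon (Prop.~2.4 of \cite{ASS1994_charge_def}), identifying $\findex\mathbb{Q}V$ with the index of the pair of projections $(Q,\,V^\ast Q V)$, whose difference is trace class by hypothesis; the trace formula then comes for free from that reference. You instead give a self-contained computation. Applying Fedosov at $n=1$ to $A=QVQ+Q^\perp$ and using $V^\ast V = VV^\ast=\Id$ you correctly obtain $\Id - A^\ast A = QV^\ast Q^\perp VQ$ and $\Id - AA^\ast = QVQ^\perp V^\ast Q$, both trace class since $QVQ^\perp = [Q,V]Q^\perp$ and $Q^\perp V Q = -[Q,V]Q$; the four-corners expansion of $\tr V[Q,V^\ast]=\tr(VQV^\ast - Q)$ then reduces to the same difference (the final cyclicity step is cleanest in block form: writing $c:=Q^\perp VQ$, one is just invoking $\tr(cc^\ast)=\tr(c^\ast c)$ for the trace-class operator $c$). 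The trade-off is the usual one: the paper's proof is instantaneous but opaque and outsourced, whereas yours is longer but elementary and makes every trace-class estimate explicit, which is in the spirit of the rest of the paper's Fredholm-theoretic treatment. One small presentational note: you should state up front that $\Id-A^\ast A$ and $\Id-AA^\ast$ being trace class is what justifies $n=1$ in the Fedosov formula and simultaneously certifies $A\in\Fredholms{\HH}$, rather than deferring the Fredholm claim to a parenthetical.
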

\begin{proof}
	This is a special case of \cite[Prop. 2.4]{ASS1994_charge_def}, where the difference of projections $Q-V^\ast Q V$ is trace class.
\end{proof}
	
\begin{lem}\label{lem:aux lemma to connect two Kitaev indices}
	If $V$ is unitary and $Q,R$ are projections such that $V = \mathbb{R}V$ and $[Q,V]$ is trace class then $$ \tr V^\ast[Q,V] = \tr V^\ast[RQR,V] \,. $$
\end{lem}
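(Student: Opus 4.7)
The plan is to first extract the structural content of the hypothesis $V = \mathbb{R}V = RVR + R^\perp$. Multiplying this identity by $R$ on either side shows $RV = VR = RVR$, so $[R, V] = 0$ (and hence $[R, V^\ast] = 0$); multiplying by $R^\perp$ and taking adjoints shows $V R^\perp = R^\perp V = V^\ast R^\perp = R^\perp V^\ast = R^\perp$. Equivalently, in the block decomposition $\HH = R\HH \oplus R^\perp\HH$, $V = V_R \oplus \Id$ for some unitary $V_R$ on $R\HH$, so that $V - \Id = R(V - \Id)R$ is supported entirely in the $R$-block.

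Next I would use $[R,V]=0$ to compute $[RQR, V] = RQVR - RVQR = R[Q,V]R$, which is trace class since $[Q,V]$ is. Multiplying by $V^\ast$ on the left, then using cyclicity together with $[R, V^\ast]=0$ and $R^2 = R$ to collapse $RV^\ast R$ to $V^\ast R$, reduces $\tr V^\ast[RQR, V]$ to $\tr V^\ast R [Q,V]$. The claim of the lemma is therefore equivalent to the vanishing $\tr V^\ast R^\perp[Q,V] = 0$.

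For this final vanishing I would use the relations $V^\ast R^\perp = R^\perp$ and $R^\perp V = R^\perp$ from the first paragraph to get $V^\ast R^\perp[Q,V] = R^\perp Q(V - \Id)$, and then factor $V - \Id = R(V-\Id)R$ to produce a rightmost factor of $R$; by cyclicity this factor meets $R^\perp$ on the left, and $R^\perp R = 0$ gives zero.

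The approach is entirely algebraic and there should be no real obstacle; the only point requiring care is to justify each cyclic rearrangement of the trace, which works at every step because $[Q,V]$ is trace class and every other factor appearing in the computation is bounded, so the standard cyclicity identity $\tr AB = \tr BA$ (valid whenever $AB$ is trace class and $A,B$ are bounded) applies.
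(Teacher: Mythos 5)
Your proof is correct and follows essentially the same route as the paper's: both reduce the identity (via cyclicity and $[R,V]=0$, which collapses $RV^\ast R$ to $V^\ast R = V^\ast - R^\perp$) to the vanishing of $\tr R^\perp[Q,V]$, and both obtain that vanishing from the block structure $R^\perp V = VR^\perp = R^\perp$. The paper finishes slightly more directly by observing $R^\perp[Q,V]R^\perp = 0$ as an operator, whereas you factor $V-\Id = R(V-\Id)R$ and kill the trace by cyclicity against $R^\perp$ — a cosmetic difference only.
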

\begin{proof}
	Since $V = \mathbb{R}V$ we have $[R,V]=0$, so that (using cyclicity) \begin{align*} \tr V^\ast[RQR,V] = \tr R V^\ast R[Q,V]  = \tr (V^\ast - R^\perp) [Q,V]\,,\end{align*}
	but using cyclicity again $\tr R^\perp [Q,V] = \tr R^\perp [Q,V] R^\perp $ and since $R^\perp V = R^\perp$ this term vanishes.
\end{proof}

\subsection{Locality properties}

\begin{prop}\label{prop:main criterion for Fredholm of truncated ops}	If $A\in \Fredholms{\HH}$ and $\partial_1 A\in \Compacts{\HH}$ (the ideal of compact operators) then $$\mathbb{\Lambda}_1 A \in \calF(\HH)\,.\label{eq:fredholm condition}$$ 
\end{prop}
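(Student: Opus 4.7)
The plan is to work in the orthogonal decomposition $\HH=\Lambda_1\HH\oplus\Lambda_1^\perp\HH$ and to use the compactness of $\partial_1 A$ to reduce $A$, up to a compact perturbation, to a block-diagonal operator from which $\mathbb{\Lambda}_1 A$ can be assembled piece by piece.

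First I would write
\[
A=\Lambda_1 A\Lambda_1+\Lambda_1 A\Lambda_1^\perp+\Lambda_1^\perp A\Lambda_1+\Lambda_1^\perp A\Lambda_1^\perp
\]
and observe that the two off-diagonal blocks are compact: for instance $\Lambda_1 A\Lambda_1^\perp=[\Lambda_1,A]\Lambda_1^\perp$ lies in $\Compacts{\HH}$ by the hypothesis $\partial_1 A\in\Compacts{\HH}$ together with the fact that $\Compacts{\HH}$ is a two-sided ideal, and analogously for $\Lambda_1^\perp A\Lambda_1$. Setting $D:=\Lambda_1 A\Lambda_1+\Lambda_1^\perp A\Lambda_1^\perp$, we obtain $A-D\in\Compacts{\HH}$, so stability of the Fredholm class under compact perturbations yields $D\in\Fredholms{\HH}$.

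Next, since $D$ respects the orthogonal splitting, I would extract Fredholmness of its diagonal restrictions: if $Q$ is a parametrix for $D$ in $\BLO{\HH}$, then the compressed operator $\Lambda_1 Q\Lambda_1|_{\Lambda_1\HH}$ serves (modulo compacts) as a parametrix for $\Lambda_1 A\Lambda_1|_{\Lambda_1\HH}$, and similarly on $\Lambda_1^\perp\HH$. Hence both $\Lambda_1 A\Lambda_1|_{\Lambda_1\HH}$ and $\Lambda_1^\perp A\Lambda_1^\perp|_{\Lambda_1^\perp\HH}$ are Fredholm on their respective subspaces.

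Finally I would note that $\mathbb{\Lambda}_1 A=\Lambda_1 A\Lambda_1+\Lambda_1^\perp$ is itself block-diagonal with respect to the same splitting: on the first summand it acts as the Fredholm operator $\Lambda_1 A\Lambda_1|_{\Lambda_1\HH}$, on the second as $\Id$. Both blocks are Fredholm, so their direct sum is Fredholm on $\HH$, which is what we want. The step I expect to require the most care is the block-wise extraction of Fredholmness from $D$, where one must verify, by sandwiching the parametrix identity with $\Lambda_1$ and using once more that the off-diagonal blocks of $A$ (hence of any nearby operator) are compact, that the compressed parametrix really does invert $\Lambda_1 A\Lambda_1|_{\Lambda_1\HH}$ modulo a compact on $\Lambda_1\HH$.
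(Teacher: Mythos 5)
Your argument is correct, and it reaches the same conclusion by a slightly more indirect route than the paper. The paper's proof takes $B$ a parametrix for $A$ and simply verifies that $G:=\mathbb{\Lambda}_1 B = \Lambda_1 B\Lambda_1+\Lambda_1^\perp$ is a parametrix for $\mathbb{\Lambda}_1 A$, using the compactness of $\partial_1 A$ to absorb the error term $\Lambda_1 A\Lambda_1 B\Lambda_1 -\Lambda_1 AB\Lambda_1 = -\Lambda_1 A\Lambda_1^\perp B\Lambda_1$; this is a one-line computation once the correct candidate $G$ is guessed. You instead introduce the intermediate block-diagonal operator $D=\Lambda_1 A\Lambda_1+\Lambda_1^\perp A\Lambda_1^\perp$, conclude it is Fredholm by compact perturbation, and then extract Fredholmness of the compression $\Lambda_1 A\Lambda_1|_{\Lambda_1\HH}$ by sandwiching the parametrix identity for $D$ with $\Lambda_1$. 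Both proofs rest on exactly the same two facts --- the off-diagonal blocks of $A$ are compact because $\Lambda_1 A\Lambda_1^\perp = \ii(\partial_1 A)\Lambda_1^\perp$ and $\Lambda_1^\perp A\Lambda_1 = -\ii\Lambda_1^\perp(\partial_1 A)$, and Atkinson's characterization of Fredholm operators --- so the two approaches are essentially equivalent in content; the paper's is merely shorter because it never needs the ``block-extraction'' lemma that you correctly flag as the step requiring care. One small simplification worth noting: the block-extraction step can be bypassed entirely by observing that for block-diagonal $D=D_1\oplus D_2$, the kernel, cokernel, and range all decompose as direct sums, so $D$ Fredholm directly forces each $D_i$ Fredholm without invoking parametrices a second time.
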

\begin{proof}
	Recall Atkinson's characterization of Fredholm operators: they are operators invertible modulo a compact operator \cite{Booss_Topology_and_Analysis}. Thus it suffices to construct an explicit parametrix for the operator of interest. With $B$ a parametrix for $A$ define $G := \mathbb{\Lambda}_1(B)$ and then
	\begin{align*}
		(\mathbb{\Lambda}_1 A)G-\Id = \Lambda_1(-\ii \partial_1 A+AB-\Id)\Lambda_1
	\end{align*}  
	and similarly for $G\mathbb{\Lambda}_1 A-\Id$. Since $\partial_1 A$ is compact and $B$ is a parametrix for $A$ we find that $G$ is a parametrix for $\mathbb{\Lambda}_1 A$ indeed.
\end{proof}

\begin{prop}\label{prop: A^2-A LOC2 then the derivative is compact}
	If $A$ is local such that $A^2-A$ is $\LOC{2}$, then $$\Id-\exp(-2\pi\ii A)$$ is also $\LOC{2}$. It follows that for such $A$, $\partial_1 \exp(-2\pi \ii A)$ is trace class.
	\end{prop}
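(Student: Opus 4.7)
The plan is to leverage the fact that $f(z) := \exp(-2\pi\ii z) - 1$ is entire and vanishes at both $z=0$ and $z=1$, so it admits the factorization
\begin{equation*}
f(z) = (z^2 - z)\,h(z)
\end{equation*}
with $h$ entire (obtained by dividing out the two simple zeros). Applying the holomorphic functional calculus to the bounded operator $A$, and noting that all objects below are functions of $A$ (hence they commute freely), yields
\begin{equation*}
\exp(-2\pi\ii A) - \Id = (A^2 - A)\, h(A)\,.
\end{equation*}

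Next I would check the locality of each factor. By hypothesis $A^2 - A$ is $\LOC{2}$. Since $A$ is local and $h$ restricts to a smooth (in fact analytic) function on $\RR$, the smooth functional calculus result for local operators recalled in the excerpt (from \cite[Appendix A]{Elbau_Graf_2002}) gives that $h(A)$ is local. The ideal property of $\LOC{2}$ inside the algebra of local operators, recalled in the paper after \cref{def:local and confined operator}, then forces the product $(A^2 - A)\, h(A)$ to be $\LOC{2}$, establishing the first claim.

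For the trace-class statement, the plan is to write
\begin{equation*}
\partial_1 \exp(-2\pi\ii A) = \partial_1\bigl(\exp(-2\pi\ii A) - \Id\bigr) = -\ii\bigl[\Lambda_1,\, \exp(-2\pi\ii A) - \Id\bigr]\,.
\end{equation*}
Writing $B := \exp(-2\pi\ii A) - \Id$, the matrix elements satisfy $([\Lambda_1, B])_{xy} = (\Lambda_1(x_1) - \Lambda_1(y_1))\, B_{xy}$; the scalar prefactor vanishes unless $x_1,y_1$ lie on opposite sides of the cutoff, which forces $|x_1| \lesssim \|x-y\|$. Trading part of the $\|x-y\|$-decay of $B$ for a $(1+|x_1|)^{-\alpha}$ factor (exactly the mechanism of \cite[Corollary 3.16]{Shapiro2019}) produces direction-$1$ confinement, while the direction-$2$ confinement of $B$ passes unscathed through the commutator since $\Lambda_1$ depends only on $x_1$. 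Thus $\partial_1 \exp(-2\pi\ii A)$ is local and confined in both spatial directions simultaneously, and therefore trace class by the property recalled in the excerpt after \cref{def:local and confined operator}.

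I expect no genuine obstacle; the only mild bookkeeping point is tracking the direction-$2$ decay through both steps, which is handled by performing the algebraic factorization first (so $B$ inherits the $(1+|x_2|)^{-\alpha}$ decay directly from $A^2-A$ via the ideal property) and only afterwards taking the commutator with $\Lambda_1$, an operation which does not disturb the $x_2$ variable at all.
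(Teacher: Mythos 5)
Your conceptual move — factoring $f(z)=\exp(-2\pi\ii z)-1=(z^2-z)h(z)$ with $h$ entire and invoking the holomorphic functional calculus — is elegant and genuinely different from the paper's argument, which manipulates the exponential power series term by term using $\sum_{l\geq 1}\frac{(-2\pi\ii n)^l}{l!}=0$ and the telescoping identity $A^l-A=\sum_{k=0}^{l-2}A^k(A^2-A)$. However, there is a real gap in the step where you assert that $h(A)$ is local.

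You justify locality of $h(A)$ via the smooth functional calculus result attributed to \cite[Appendix A]{Elbau_Graf_2002}. That result takes \emph{exponentially} local \emph{self-adjoint} input and produces polynomially local output; it is proved via the Helffer–Sj\"ostrand formula together with the Combes–Thomas estimate, both of which require exponential locality. The proposition only assumes $A$ is (polynomially) local and places no self-adjointness hypothesis, and in every actual application in the paper the relevant $A$ — e.g.\ $g(H)\Lambda_2 g(H)$, $\Lambda_2 g(H)\Lambda_2$, or $\Lambda_2 g(\Lambda_2 H\Lambda_2)\Lambda_2$ — is built from $g(H)$, which is only \emph{polynomially} local because $g$ is merely smooth. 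So the cited functional-calculus locality result does not cover these cases, and the factorization alone does not reduce the burden. Closing the gap requires establishing directly that $h(A)=\sum_n c_n A^n$ is local for polynomially local $A$; since $h$ is entire one has $|c_n|\leq C_R R^{-n}$ for every $R$, but one still needs a uniform-in-$n$ bound $\|(A^n)_{xy}\|\leq D^n(1+\|x-y\|)^{-\alpha/2}$ with fixed decay rate, which is exactly the estimate at the technical heart of the paper's proof. In other words, your factorization repackages the algebra more cleanly, but the analytic content you must supply collapses back to the same power-series control the paper carries out explicitly. The second half of your argument, passing from $\LOC{2}$ of $\exp(-2\pi\ii A)-\Id$ to trace-classness of $\partial_1\exp(-2\pi\ii A)$ via the commutator-with-$\Lambda_1$ mechanism of \cite[Corollary 3.16]{Shapiro2019}, is correct and matches the paper.
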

\begin{proof}
	For any $n\in\ZZ$, we may re-write
		\begin{align*}
		\ee^{-2\pi\ii nA}-\Id & = \sum_{l=1}^{\infty}\frac{1}{l!}\left(-2\pi\ii nA\right)^{l}\\
		 &=  \sum_{l=1}^{\infty}\frac{1}{l!}\left(-2\pi\ii n\right)^{l}\left(A^{l}-A\right)\tag{using $\sum_{l=1}^{\infty}\frac{1}{l!}\left(-2\pi\ii n\right)^{l} = 0$}\\
		 &=  \sum_{l=2}^{\infty}\frac{1}{l!}\left(-2\pi\ii n\right)^{l}\sum_{k=0}^{l-2}A^{k}\left(A^{2}-A\right)\tag{$l=1$ term vanishes.}\,.
	\end{align*}
	
	Next, if $A$ is local, then by the algebraic closure of course $A^k$ is local as well. However, since for us $k$ will get arbitrarily large, we need to control how the estimate gets worse as $k\to\infty$.
	
	Let $\alpha\in\NN$ be sufficiently large for $ \|A_{xy}\| \leq C_\alpha(1+\|x-y\|)^{-\alpha}$ for all $x,y\in\ZZ^2$. Then \begin{align*}
		\|(A^k)_{xy}\| \leq (C_\alpha)^k\sum_{x_1,\dots,x_{k-1}} (1+\|x-x_1\|)^{-\alpha}\dots(1+\|x_{k-1}-y\|)^{-\alpha}\,.
	\end{align*}
	Now using the triangle inequality we have $(1+\|x-y\|)^{-\alpha}(1+\|y-z\|)^{-\alpha} \leq (1+\|x-z\|)^{-\alpha}$ so that 
	\begin{align*}
	\|(A^k)_{xy}\| &\leq (C_\alpha)^k(1+\|x-y\|)^{\alpha/2}\sum_{x_1,\dots,x_{k-1}} (1+\|x-x_1\|)^{-\alpha/2}\dots(1+\|x_{k-1}-y\|)^{-\alpha/2} \\ &\leq (C_\alpha)^k(1+\|x-y\|)^{\alpha/2}\sum_{x_1} (1+\|x-x_1\|)^{-\alpha/2}\dots\sum_{x_{k-1}}(1+\|x_{k-2}-x_{k-1}\|)^{-\alpha/2} \\&=  (C_\alpha)^k(\sum_{x'} (1+\|x'\|)^{-\alpha/2})^{k-1}(1+\|x-y\|)^{-\alpha/2}\\&=: (D_{\alpha/2})^k(1+\|x-y\|)^{-\alpha/2}
	\end{align*}
	
	Since $\alpha\in\NN$ was arbitrarily large, this last term is indeed finite and we get that the rate of decay in $\|x-y\|$ is fixed and the constant grows polynomially in $k$.
	
	Going back to our initial expression, we now have (using also the $\LOC{2}$ estimate for $A^2-A$) \begin{align*}
		\|(\ee^{-2\pi\ii nA}-\Id)_{xy}\| &\leq \sum_{l=2}^{\infty}\frac{1}{l!}\left(2\pi n\right)^{l}\sum_{k=0}^{l-2}\sum_{x'}(D_{\alpha})^k(1+\|x-x'\|)^{-\alpha}C_\alpha (1+\|x'-y\|)^{-\alpha}(1+|x'_2|+|y_2|)^{-\alpha} \,.
	\end{align*}
	
	Now $\sum_{l=2}^{\infty}\frac{1}{l!}\left(2\pi n\right)^{l}\sum_{k=0}^{l-2}(D_{\alpha})^k<\infty$ and the other terms are summable and yield a $\LOC{2}$ estimate using the triangle inequality again, so that $\exp(-2\pi \ii A) - \Id$ is $\LOC{2}$.
	
	This implies (by the results of \cite[Section 3.3]{Shapiro2019}) that $\partial_1 \exp(-2\pi \ii A) = \partial_1 (\exp(-2\pi \ii A) - \Id)$ is local and confined in both directions and hence trace-class.
\end{proof}

\begin{cor} \label{eq:key lemma to show homotopies are Fredholm}
	If $A(t):[0,1] \to \BLO{\HH}$ is a continuous map such that $A(0) = A_0$ and $A(1) = A_1$ and $A(t)^2 - A(t)$ is $\LOC{2}$ for all $t$, then there is a homotopy within $\calF(\HH)$ from
	$$ 		 \mathbb{W}_1 A_0  \longrightarrow \mathbb{W}_1 A_1 $$ and if $[A(t), \Theta] = 0$ then it passes within $\calF_\Theta(\HH)$, so that both $\findex$ and $\findex_2$ agree on these two operators.
\end{cor}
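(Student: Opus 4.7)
The plan is to verify that the map $t \mapsto \mathbb{W}_1 A(t) = \mathbb{\Lambda}_1 \exp(-2\pi\ii A(t))$ is a norm-continuous path lying inside $\calF(\HH)$ (respectively inside $\calF_\Theta(\HH)$), and then to invoke homotopy stability of $\findex$ and $\findex_2$. Continuity in $t$ is immediate: the entire functional calculus $B \mapsto \exp(-2\pi\ii B)$ is norm-Lipschitz on every norm-bounded subset of $\BLO{\HH}$ (uniform convergence of the exponential power series plus the telescoping bound $\|A^n - B^n\| \leq n R^{n-1}\|A - B\|$), and $C \mapsto \mathbb{\Lambda}_1 C$ is bounded linear.

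For Fredholmness at each fixed $t$, I would apply \cref{prop:main criterion for Fredholm of truncated ops} to $B_t := \exp(-2\pi\ii A(t))$. The two inputs it needs are (a) $B_t \in \calF(\HH)$ and (b) $\partial_1 B_t \in \Compacts{\HH}$. Item (a) is free: $B_t$ is invertible in $\BLO{\HH}$ with two-sided inverse $\exp(2\pi\ii A(t))$ by the homomorphism property of the entire calculus, and invertible operators are Fredholm. Item (b) is exactly the payload of \cref{prop: A^2-A LOC2 then the derivative is compact}, which in fact gives trace class and uses the hypothesis $A(t)^2 - A(t) \in \LOC{2}$ together with locality of $A(t)$ (an implicit hypothesis of the corollary, satisfied by construction in both lemmas that invoke it). Assembled together, $\mathbb{W}_1 A(t) \in \calF(\HH)$ for every $t$, producing the desired homotopy, and the stability of $\findex$ on $\calF(\HH)$ gives agreement of the integer index at the endpoints.

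For the refined statement in the $\Theta$-odd case, I would verify that \cref{eq:the theta-odd constraint} holds along the path, under the additional (and implicit) self-adjointness of $A(t)$. Anti-linearity of $\Theta$ combined with $[\Theta, A(t)] = 0$ gives $\Theta \exp(-2\pi\ii A(t)) \Theta^{-1} = \exp(2\pi\ii A(t))$ by termwise expansion of the power series, and self-adjointness of $A(t)$ then identifies this with $\exp(-2\pi\ii A(t))^{\ast}$. Since $[\Theta, \Lambda_1] = 0$, the super-operator $\mathbb{\Lambda}_1$ passes through the conjugation, so $\Theta\,\mathbb{W}_1 A(t)\,\Theta^{-1} = (\mathbb{W}_1 A(t))^{\ast}$; rearranging with $\Theta^2 = -\Id$ yields $\mathbb{W}_1 A(t) = -\Theta (\mathbb{W}_1 A(t))^{\ast}\,\Theta$, the $\Theta$-odd condition. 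Stability of $\findex_2$ on $\calF_\Theta(\HH)$ under norm-continuous perturbations (\cref{thm:continuity of index2}) then gives equality of the $\ZZ_2$ indices at the endpoints.

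The main obstacle is really just bookkeeping of the implicit hypotheses: neither locality nor self-adjointness of $A(t)$ appears in the statement of the corollary, but both are needed (locality to feed \cref{prop: A^2-A LOC2 then the derivative is compact}, self-adjointness to turn the analytic inverse $\exp(2\pi\ii A(t))$ into the Hilbert-space adjoint used in the $\Theta$-odd relation). Both are transparent in the applications in the paper, so once they are acknowledged the proof is a clean concatenation of \cref{prop:main criterion for Fredholm of truncated ops}, \cref{prop: A^2-A LOC2 then the derivative is compact}, and homotopy stability of the two indices.
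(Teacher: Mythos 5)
Your proof is correct and follows the paper's own argument: the paper's one-sentence proof of the corollary is precisely the concatenation of \cref{prop: A^2-A LOC2 then the derivative is compact} and \cref{prop:main criterion for Fredholm of truncated ops}, with the continuity, invertibility of $\exp$, and $\Theta$-odd verification left implicit. Your careful flagging of the implicit locality and self-adjointness hypotheses on $A(t)$ is a fair observation about the statement of the corollary, but does not change the route.
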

\begin{proof}
	Since by \cref{prop: A^2-A LOC2 then the derivative is compact} we have that $\partial_1 \exp(-2\pi\ii A(t))$ is trace-class, it is compact, so we may apply \cref{prop:main criterion for Fredholm of truncated ops} to get that for any $t$, $\mathbb{W}_1 A(t) \in \calF(\HH)$.
\end{proof}

\subsection{The smooth functional calculus}
\label{section:control_of_functional_calculus_of_difference_of_operators}

In this section we want to establish that both $\Lambda_2(g(\Lambda_2H\Lambda_2)-g(H))\Lambda_2$ and $g(\iota^\ast H \iota)-g(\hat{H})$ are $\LOC{2}$, which is used in \cref{lem:final BEC homotopy} and \cref{prop:general BC} respectively.

First we recall \cite[Theorem 10.5]{AizenmanWarzel2016} the basic estimate:
\begin{thm}\label{thm:Combes-Thomas estimate}
(The Combes-Thomas Estimate) If $A\in \BLO{\HH}$ is exponentially local and self-adjoint then there are constants $C<\infty,\mu>0$ such that \begin{align}\norm{\left(A-z\Id\right)_{xy}^{-1}}\leq C|\Im\{z\} |^{-1}\exp(-\mu|\Im\{z\} |\norm{x-y})\qquad(x,y\in\ZZ^2;z\in\CC\setminus\RR)\,.\label{eq:the Combes-Thomas estimate}\end{align}
\end{thm}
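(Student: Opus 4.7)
The plan is to execute the classical Combes--Thomas analytic dilation argument, adapted to the lattice setting. For $\eta \in \RR^2$ I introduce the conjugated operator $A_\eta := \ee^{\eta \cdot X} A \ee^{-\eta \cdot X}$, whose matrix elements factorize as $(A_\eta)_{xy} = \ee^{\eta \cdot (x-y)} A_{xy}$. Although $\ee^{\pm\eta\cdot X}$ are unbounded multiplication operators, this conjugation extends to a bounded operator on $\HH$ once $|\eta|$ is small compared to the intrinsic decay rate $\mu_0$ of $A$: by exponential locality one has $\|(A_\eta)_{xy}\| \leq C_0 \ee^{-(\mu_0 - |\eta|)\|x-y\|}$, so for $|\eta| \leq \mu_0/2$ the operator $A_\eta$ remains exponentially local with uniform constants.

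The crucial estimate is then $\|A_\eta - A\| \leq C_1 |\eta|$ for all such $\eta$. Using $|\ee^t - 1|\leq |t|\ee^{|t|}$ together with exponential locality,
\begin{align*}
\norm{(A_\eta - A)_{xy}} \leq |\eta| \|x-y\|\, C_0\, \ee^{-(\mu_0 - |\eta|)\|x-y\|},
\end{align*}
and the Schur test on $\ell^2(\ZZ^2)\otimes\CC^N$ then delivers the stated norm bound, since on $\ZZ^2$ the row and column sums of the right-hand side are uniformly finite.

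Self-adjointness of $A$ gives $\|(A - z\Id)^{-1}\| \leq |\Im z|^{-1}$ for every $z \in \CC\setminus\RR$. Writing $A_\eta - z\Id = (A - z\Id)\bigl[\Id + (A - z\Id)^{-1}(A_\eta - A)\bigr]$, a Neumann series shows that whenever $|\eta| \leq |\Im z|/(2C_1)$, the operator $A_\eta - z\Id$ is invertible with $\|(A_\eta - z\Id)^{-1}\| \leq 2 |\Im z|^{-1}$. Since $(A_\eta - z\Id)^{-1} = \ee^{\eta\cdot X}(A - z\Id)^{-1}\ee^{-\eta\cdot X}$, passing to matrix elements gives
\begin{align*}
\norm{(A - z\Id)^{-1}_{xy}} = \ee^{-\eta\cdot(x-y)}\, \norm{(A_\eta - z\Id)^{-1}_{xy}} \leq \frac{2}{|\Im z|}\, \ee^{-\eta\cdot(x-y)}.
\end{align*}
The choice $\eta := -c|\Im z|(x-y)/\|x-y\|$, with $c$ small enough that $|\eta|$ lies within both admissibility ranges, produces the exponent $-c|\Im z|\|x-y\|$ and hence the desired bound.

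The main obstacle I anticipate is not conceptual but rather the bookkeeping of the two restrictions on $|\eta|$: it must be small compared to $\mu_0$ (to keep $A_\eta$ bounded and the Schur sums finite) and small compared to $|\Im z|$ (for the Neumann series to converge). Since the statement demands decay at a rate linear in $|\Im z|$, the binding restriction for small $|\Im z|$ is the second one, which is precisely what produces the factor $\mu |\Im z|$ in the final exponent; the constants $C, \mu > 0$ depend only on the intrinsic rates $C_0, \mu_0$ of $A$.
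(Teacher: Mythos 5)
The paper does not prove this statement; it cites it verbatim from the Aizenman--Warzel textbook (their Theorem 10.5), so there is no internal proof to compare against. Your argument is the standard Combes--Thomas conjugation proof and the overall strategy is correct. Two technical points deserve attention, neither fatal but both worth being explicit about.

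First, the step ``$(A_\eta - z\Id)^{-1} = \ee^{\eta\cdot X}(A - z\Id)^{-1}\ee^{-\eta\cdot X}$, hence passing to matrix elements gives\ldots'' is not an immediate operator identity, because $\ee^{\pm\eta\cdot X}$ are unbounded and a priori $(A - z\Id)^{-1}\ee^{-\eta\cdot X}\psi$ need not lie in the domain of $\ee^{\eta\cdot X}$ --- the exponential decay of the resolvent's kernel, which would save you, is precisely what is to be proved. The usual rigorous fixes are either to conjugate by a bounded cutoff $\ee^{\eta\cdot X_n}$ (with $X_n$ a truncation of the position operator), establish the uniform-in-$n$ bound, and pass to the limit, or to invoke analyticity of $\eta\mapsto(A_\eta - z\Id)^{-1}$ in a complex neighbourhood of $0$ and compare with the unitary case $\eta\in\ii\RR^2$. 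Your write-up skips this entirely.

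Second, as you already half-observe in your last paragraph, the two admissibility constraints $|\eta|\le\mu_0/2$ and $|\eta|\le|\Im\{z\}|/(2C_1)$ force the final decay rate to be $\min\bigl(|\Im\{z\}|/(2C_1),\,\mu_0/2\bigr)\norm{x-y}$, not a single expression linear in $|\Im\{z\}|$ uniformly over all of $\CC\setminus\RR$. For $|\Im\{z\}|$ large your argument only gives a $z$-independent decay rate $\mu_0/2$, and indeed the bound in \cref{eq:the Combes-Thomas estimate} as literally stated is false in that regime (take $A$ of finite hopping range: the resolvent kernel decays like $|\Im\{z\}|^{-\norm{x-y}}$, which is not dominated by $\exp(-\mu|\Im\{z\}|\norm{x-y})$ as $|\Im\{z\}|\to\infty$). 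This is a defect of the theorem statement rather than of your proof, and it is harmless for the paper's purposes: the estimate is only ever invoked inside the Helffer--Sj\"ostrand formula \cref{eq:the HS formula} with $\tilde f$ supported in a bounded strip, so $z$ is confined to a compact set. You would do well to either restrict $z$ to a bounded set at the outset or state the conclusion with rate $\mu\min(1,|\Im\{z\}|)$.
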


Next, we need a result about the smooth functional calculus \cite{Davies_1995} (and see references therein):
\begin{thm}[The Helffer-Sj\"ostrand formula]
	Let $f:\RR\to\CC$ be smooth and of compact support, and let $\tilde{f}:\CC\to\CC$ be a quasi-analytic extension of it (which is supported within some strip about the real axis). This implies that for all $N\in\NN$, there is some $C_N<\infty$ such that \begin{align} |(\partial_{\bar{z}} f)(z)|\leq C_N |\Im\{z\}|^{N}\qquad(z\in\CC)\,. \label{eq:control of derivative of function in the HS smooth functional calculus formula}\end{align} Then for $A\in\BLO{\HH}$ self-adjoint, we have
	
	\begin{align}
		f(A) = \frac{1}{2\pi}\int_{z\in\CC} (\partial_{\bar{z}} f)(z) (A-z\Id)^{-1}\dif{z}\label{eq:the HS formula}
	\end{align}
	
\end{thm}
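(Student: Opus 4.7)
The plan is to reduce the operator identity to a pointwise scalar identity on $\RR$ and then lift it using the spectral theorem. For $A\in\BLO{\HH}$ self-adjoint, $f(A)=\int_\RR f(\lambda)\,\dif E_A(\lambda)$; if we can show that for every $\lambda\in\RR$
\begin{align*}
f(\lambda)=\frac{1}{2\pi}\int_\CC (\partial_{\bar z}\tilde f)(z)(\lambda-z)^{-1}\,\dif z
\end{align*}
holds (with the paper's normalization absorbing the factor $2\pi$ via the convention $\dif z\wedge\dif\bar z=-2\ii\,\dif x\,\dif y$), then substituting into the spectral integral and exchanging the order of integration via Fubini yields the operator formula with $(A-z\Id)^{-1}$ in place of $(\lambda-z)^{-1}$.

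For the scalar step I would invoke the Cauchy--Pompeiu representation of $\tilde f$ on any disk $D_R$ containing $\supp\tilde f$: the boundary term vanishes by compact support, and evaluation at $\zeta=\lambda\in\RR$ (where $\tilde f(\lambda)=f(\lambda)$) produces the identity. Although $(\lambda-z)^{-1}$ is singular on the real axis, the hypothesis \cref{eq:control of derivative of function in the HS smooth functional calculus formula} with $N\geq 1$ forces $\partial_{\bar z}\tilde f$ to vanish to order $N$ there, so the integrand is bounded on $\supp\tilde f$ and absolutely integrable. The same bound, combined with the elementary estimate $\|(A-z\Id)^{-1}\|\leq|\Im\{z\}|^{-1}$, justifies the Fubini exchange via the matrix-element representation $\langle\phi,f(A)\psi\rangle=\int f\,\dif\mu_{\phi,\psi}$, since the resulting scalar integrand is dominated by $C_N|\Im\{z\}|^{N-1}\|\phi\|\|\psi\|$ on the compact support of $\tilde f$.

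The main obstacle, and essentially the only nontrivial input, is exhibiting a quasi-analytic extension satisfying \cref{eq:control of derivative of function in the HS smooth functional calculus formula}. The standard device is
\begin{align*}
\tilde f(x+\ii y):=\chi(y)\sum_{k=0}^{N}\frac{(\ii y)^{k}}{k!}f^{(k)}(x)
\end{align*}
with $\chi\in C_c^\infty(\RR)$ identically $1$ near $0$; a direct computation telescopes the Taylor-like series and produces the cancellation
\begin{align*}
\partial_{\bar z}\tilde f(x+\ii y)=\tfrac12\chi(y)\frac{(\ii y)^{N}}{N!}f^{(N+1)}(x)+\tfrac{\ii}{2}\chi'(y)\sum_{k=0}^{N}\frac{(\ii y)^{k}}{k!}f^{(k)}(x)\,,
\end{align*}
whose first summand is manifestly $O(|\Im\{z\}|^{N})$ and whose second is supported on the region where $\chi'(y)\neq 0$, which is bounded away from the real axis so any polynomial bound in $|\Im\{z\}|$ there is trivial. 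Once this estimate is in hand, all remaining work is book-keeping of factors coming from the chosen orientation convention on $\CC$.
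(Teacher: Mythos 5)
The paper does not actually prove this theorem: it is recalled from the literature with a citation to Davies (and the text immediately before it reads ``Next, we need a result about the smooth functional calculus \cite{Davies_1995}''). So there is no internal proof to compare against, and your task was to supply the standard argument, which you have essentially done.

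Your proof is correct in its essentials and follows the usual route: reduce to the scalar identity
\begin{align*}
f(\lambda)\propto\int_\CC(\partial_{\bar z}\tilde f)(z)(\lambda-z)^{-1}\,\dif{A}(z)
\end{align*}
via the spectral theorem and Fubini, obtain the scalar identity from the Cauchy--Pompeiu representation with the boundary term dropped by compact support, and control integrability using \cref{eq:control of derivative of function in the HS smooth functional calculus formula} together with $\|(A-z\Id)^{-1}\|\leq|\Im\{z\}|^{-1}$. Your derivative computation for the extension $\tilde f(x+\ii y)=\chi(y)\sum_{k=0}^{N}\frac{(\ii y)^k}{k!}f^{(k)}(x)$ is also correct; I verified the telescoping cancellation.

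One small imprecision: that finite-order extension only yields the bound \cref{eq:control of derivative of function in the HS smooth functional calculus formula} up to the chosen order $N$, whereas the statement asserts a \emph{single} $\tilde f$ satisfying it for all $N\in\NN$. To obtain that, one uses the more careful construction (as in Davies' book) with an infinite sum and $k$-dependent cutoffs chosen to force convergence. However, this does not affect the validity of \cref{eq:the HS formula}: the right-hand side is independent of the choice of quasi-analytic extension---apply Cauchy--Pompeiu to the difference of two extensions, which vanishes on $\RR$ and hence has zero contribution---so any fixed $N\geq1$ already gives the operator identity, exactly as you argue.
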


One combines these two results to obtain that the smooth functional calculus on exponentially local self-adjoint operators is polynomially local \cite[Appendix A]{Elbau_Graf_2002}, a result we have been using freely.

Finally, we also get a similar statement to \cite[Lemma A3]{Elbau_Graf_2002}, which says that the difference of the smooth functional calculus of operators whose difference is $\LOC{2}$ is itself $\LOC{2}$:
\begin{prop}\label{prop:control of difference of operators under smooth function}
	If $A,B$ are two self-adjoint exponentially local operators such that $A-B$ is exponentially $\LOC{2}$, and $f:\RR\to\CC$ is a smooth function then $f(A)-f(B)$ is also $\LOC{2}$.
\end{prop}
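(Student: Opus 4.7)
The plan is to follow the strategy of \cite[Lemma A3]{Elbau_Graf_2002}, combining the Helffer--Sj\"ostrand formula with the second resolvent identity. Writing \cref{eq:the HS formula} for $f(A)$ and $f(B)$ and subtracting, the identity $(A-z)^{-1}-(B-z)^{-1}=(A-z)^{-1}(B-A)(B-z)^{-1}$ gives
\begin{align*}
f(A)-f(B) = \frac{1}{2\pi}\int_{\CC}(\partial_{\bar z}\tilde f)(z)\,(A-z)^{-1}(B-A)(B-z)^{-1}\,\dif{z}.
\end{align*}
For each $z$ in the strip supporting $\tilde f$, the integrand is already $\LOC{2}$ by the ideal property: the two resolvents are exponentially local by \cref{thm:Combes-Thomas estimate}, while $B-A$ is exponentially $\LOC{2}$ by hypothesis. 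All the work is in producing a matrix-element estimate uniform enough in $z$ to survive integration.

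First I would bound $\|((A-z)^{-1}(B-A)(B-z)^{-1})_{xy}\|$ for fixed $z$ by inserting two resolutions of the identity. Combes--Thomas \cref{eq:the Combes-Thomas estimate} bounds each resolvent matrix element by $C|\Im z|^{-1}\exp(-\mu|\Im z|\|x-w\|)$, while the exponential $\LOC{2}$ hypothesis gives $\|(B-A)_{ww'}\|\le C\exp(-\mu'(\|w-w'\|+|w_2|))$. Splitting each exponential into two halves and using the triangle inequalities $\|x-y\|\le\|x-w\|+\|w-w'\|+\|w'-y\|$ (to extract $\|x-y\|$-decay) and $|x_2|\le\|x-w\|+|w_2|$ (to convert the $|w_2|$-decay of $B-A$ into $|x_2|$-decay), and using the remaining halves to obtain absolute convergence of the sums over $w,w'$, one arrives at
\begin{align*}
\|((A-z)^{-1}(B-A)(B-z)^{-1})_{xy}\| \le \frac{C}{|\Im z|^{2}}\exp\bigl(-c|\Im z|(\|x-y\|+|x_2|)\bigr)
\end{align*}
for $z$ close enough to the real axis.

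Next I would multiply by $|(\partial_{\bar z}\tilde f)(z)|\le C_N|\Im z|^N$ (with $N$ taken arbitrarily large, as allowed by \cref{eq:control of derivative of function in the HS smooth functional calculus formula}) and integrate over the strip. Setting $R:=\|x-y\|+|x_2|$ and $s:=|\Im z|$, the one-variable estimate
\begin{align*}
\int_0^{\delta} s^{N-2}\ee^{-csR}\,\dif s \;\le\; C_N'\,R^{-(N-1)}
\end{align*}
(by rescaling $u=csR$) converts the $|\Im z|$-dependent exponential rate into a polynomial rate in $R$. Since $\max(\|x-y\|,|x_2|)\le R\le 2\max(\|x-y\|,|x_2|)$, the bound $(1+R)^{-2\alpha}\le(1+\|x-y\|)^{-\alpha}(1+|x_2|)^{-\alpha}$ then yields $\LOC{2}$ decay of $f(A)-f(B)$ at an arbitrarily large polynomial rate.

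The main obstacle is precisely the familiar tension in Helffer--Sj\"ostrand arguments: the Combes--Thomas exponential rate $\mu|\Im z|$ degenerates as $z$ approaches the real axis, exactly where $\partial_{\bar z}\tilde f$ is concentrated. Exploiting the freedom to take the order of vanishing $N$ in \cref{eq:control of derivative of function in the HS smooth functional calculus formula} arbitrarily large is what rescues the argument, at the cost that only polynomial rather than exponential $\LOC{2}$ decay appears in the conclusion---consistent with the general fact that smooth functional calculus of exponentially local operators is only polynomially local.
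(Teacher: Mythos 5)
Your proposal is correct and follows essentially the same route as the paper: write $f(A)-f(B)$ via the Helffer--Sj\"ostrand formula and the second resolvent identity, estimate matrix elements of the resolvents via Combes--Thomas, exploit the exponential $\LOC{2}$ bound on $B-A$ in the middle, and then beat the $|\Im z|^{-1}$ singularities and the $z$-dependent decay rate by choosing the vanishing order $N$ of $\partial_{\bar z}\tilde f$ arbitrarily large. The only cosmetic difference is in the last integration step: the paper performs repeated integrations by parts to extract the polynomial rate $N-1$, whereas you rescale $u=c s R$; both yield the same $R^{-(N-1)}$ bound, and your version is if anything cleaner and more explicit about where the loss of summation constants (powers of $|\Im z|^{-1}$) must be absorbed by increasing $N$.
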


\begin{proof}
First note that since $A,B$ are bounded, we may WLOG assume that $f$ is of compact support, and define $K:=|\supp(\partial_z \tilde{f})|$. Hence we may use \cref{eq:the HS formula} to get $$ f(A)-f(B) = \frac{1}{2\pi}\int_{z\in\CC}(\partial_{\bar{z}} f)(z) (A-z\Id)^{-1}(B-A)(B-z\Id)^{-1}\dif{z}\,. $$

For any $N\in\NN$, taking the $n,m\in\ZZ^2$ matrix elements, using the fact $f$ has compact support and \cref{eq:control of derivative of function in the HS smooth functional calculus formula} as well as \cref{eq:the Combes-Thomas estimate} we find: 

\begin{align*}
	\|(f(A)-f(B))_{n,m}\| \leq \frac{K}{2\pi}\int_{y\in\RR} C_N|y|^N\sum_{l,k} C |y|^{-1} \ee^{-\mu y \|n-l\|}\|(B-A)_{l,k}\|C |y|^{-1} \ee^{-\mu y \|k-m\|} \dif{y}\,.
\end{align*}
Now since $A-B$ is assumed $\LOC{2}$ we have some $D<\infty,\nu>0$ such that $$ \|(B-A)_{l,k}\| \leq D \exp(-\nu(\|l-k\|+|l_2|+|k_2|))\qquad(l,k\in\ZZ^2) $$ and so all together

\begin{align*}
\|(f(A)-f(B))_{n,m}\| &\leq \frac{K C_N C^2D}{2\pi}\int_{y\in\RR} |y|^{N-2}\sum_{l,k} \ee^{-\mu y \|n-l\|-\nu(\|l-k\|+|l_2|+|k_2|)-\mu y \|k-m\|}  \dif{y} \\ &\leq C'\int_{y\in\RR} |y|^{N-2} \ee^{-\mu' y (\|n-m\|+|n_2|+|m_2|)}  \dif{y}\,.
\end{align*}
Multiple integrations by parts to get rid of the $|y|^{N-2}$ factor in the integrand yield now polynomial decay at rate $N-1$.
 
\end{proof}

\begin{rem}\label{eq:remark for final BEC homotopy}

We note that the first expression want to control is re-written as \begin{align*}
	\Lambda_2(g(\Lambda_2H\Lambda_2)-g(H))\Lambda_2 &= \Lambda_2(g(\Lambda_2H\Lambda_2+\Lambda_2^\perp H\Lambda_2^\perp)-g(H))\Lambda_2
\end{align*}
and since \begin{align*}\Lambda_2H\Lambda_2+\Lambda_2^\perp H\Lambda_2^\perp-H = 2\Re\{\Lambda_2 H\Lambda_2^\perp\}=2\Re\{\ii(\partial_2 H)\Lambda_2^\perp\}  \end{align*} is $\LOC{2}$ by applying $\partial_2$ on a local operator $H$, we may apply \cref{prop:control of difference of operators under smooth function} on it to get that $\Lambda_2(g(\Lambda_2H\Lambda_2)-g(H))\Lambda_2$ is $\LOC{2}$ indeed.

The second expression admits a direct application of \cref{prop:control of difference of operators under smooth function} due to our hypothesis in \cref{def:compatibility between bulk and edge}.


\end{rem}

\subsection{More general boundary conditions}
We want to generalize \cref{eq:connection between edge Fredholm index and bulk Fredholm index} to any boundary conditions. This is achieved using the fact that
\begin{prop}\label{prop:general BC}
	If $\hat{H}$ and $H$ are compatible as in \cref{def:compatibility between bulk and edge}, then there is a homotopy within $\calF(\HH)$ from $$ \mathbb{W}_1 g(\hat{H})  \longrightarrow \mathbb{W}_1 g(\operatorname{Ad}_{\iota^\ast}H) $$ and if $[\hat{H},\Theta]=0$ and $[H,\Theta]=0$ then it passes within $\calF_\Theta(\HH)$, so that both $\findex$ and $\findex_2$ agree on these two operators.
\end{prop}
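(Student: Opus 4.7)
The plan is to construct the homotopy by linear interpolation. Define $\hat{H}(t) := (1-t)\operatorname{Ad}_{\iota^\ast}H + t\hat{H}$ for $t \in [0,1]$; each $\hat{H}(t)$ is a self-adjoint, exponentially local edge Hamiltonian, and by the compatibility assumption, $\hat{H}(t) - \operatorname{Ad}_{\iota^\ast}H = t(\hat{H} - \operatorname{Ad}_{\iota^\ast}H)$ is exponentially $\LOC{2}$. The Helffer--Sj\"ostrand formula then guarantees that $t \mapsto \mathbb{W}_1 g(\hat{H}(t))$ is norm-continuous and connects the two endpoints of interest.

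To show the path lies in $\calF(\hat{\HH})$ at every $t$, I apply the $\hat{\HH}$-analog of \cref{eq:key lemma to show homotopies are Fredholm}, which reduces the Fredholm property to verifying that $(g^2 - g)(\hat{H}(t))$ is $\LOC{2}$ for all $t$. By \cref{prop:control of difference of operators under smooth function} applied to the smooth, compactly-supported function $g^2 - g$ (whose support lies inside the bulk gap of $H$), the differences $(g^2-g)(\hat{H}(t)) - (g^2-g)(\operatorname{Ad}_{\iota^\ast}H)$ are themselves $\LOC{2}$, so it suffices to handle the Dirichlet endpoint.

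For that endpoint I use the identity $g(\operatorname{Ad}_{\iota^\ast}H) = \iota^\ast g(\Lambda_2 H \Lambda_2)\iota$ (already employed in the earlier lemma) to write, with $Q := g(\Lambda_2 H \Lambda_2)$,
\begin{equation*}
(g^2 - g)(\operatorname{Ad}_{\iota^\ast}H) = \iota^\ast\bigl[(\Lambda_2 Q \Lambda_2)^2 - \Lambda_2 Q \Lambda_2 + \Lambda_2 Q \Lambda_2^\perp Q \Lambda_2\bigr]\iota.
\end{equation*}
The first bracketed summand is $\LOC{2}$ by the very computation carried out in the proof of \cref{lem:final BEC homotopy} at $t = 1$; the second is $\LOC{2}$ because $\Lambda_2 Q \Lambda_2^\perp = \ii(\partial_2 Q)\Lambda_2^\perp$ (from $[\Lambda_2, Q] = \ii \partial_2 Q$) and $\partial_2 Q$ is $\LOC{2}$ by Corollary 3.16 of \cite{Shapiro2019}. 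Since conjugation by $\iota^\ast$ preserves the $\LOC{2}$ property, the claim follows.

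Finally, in the time-reversal case: since $\Theta$ commutes with all position operators (in particular with $\iota, \iota^\ast$), the hypothesis $[H,\Theta] = [\hat{H},\Theta] = 0$ implies $[\operatorname{Ad}_{\iota^\ast}H, \Theta] = 0$, and linearity gives $[\hat{H}(t), \Theta] = 0$ for all $t$; functional calculus then yields $[g(\hat{H}(t)), \Theta] = 0$, and $\mathbb{W}_1 g(\hat{H}(t))$ is $\Theta$-odd throughout. The main obstacle I anticipate is not the homotopy logic but the need to adapt the auxiliary propositions (\cref{prop:main criterion for Fredholm of truncated ops}, \cref{prop: A^2-A LOC2 then the derivative is compact}, \cref{prop:control of difference of operators under smooth function}, and \cref{eq:key lemma to show homotopies are Fredholm}) from $\HH$ to $\hat{\HH}$; this transfer should be routine because each proof depends only on position operators, the commutator $\partial_1 = -\ii[\Lambda_1, \cdot]$, Combes--Thomas--Helffer--Sj\"ostrand estimates, and the ideal structure of $\LOC{2}$ inside the local algebra, all of which have natural analogs on the half-space.
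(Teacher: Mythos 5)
Your proof is correct but takes a genuinely different route from the paper's. The paper interpolates \emph{after} flattening: it sets $A(t) = t\,g(\operatorname{Ad}_{\iota^\ast}H) + (1-t)\,g(\hat H)$, so continuity of the path is trivial, and when expanding $A(t)^2 - A(t)$ it needs two ingredients -- that $g(\operatorname{Ad}_{\iota^\ast}H) - g(\hat H)$ is $\LOC{2}$ (from \cref{prop:control of difference of operators under smooth function}) and that $(g^2-g)(\hat H)$ is $\LOC{2}$, the latter obtained by citing \cite[Lemma A3 (iii)]{Elbau_Graf_2002}. You instead interpolate the Hamiltonians, $\hat H(t) = (1-t)\operatorname{Ad}_{\iota^\ast}H + t\hat H$, which costs you a Helffer--Sj\"ostrand continuity argument for $t\mapsto g(\hat H(t))$ but in return lets you verify $(g^2-g)(\hat H(t))\in\LOC{2}$ for all $t$ by a single reduction (via \cref{prop:control of difference of operators under smooth function} applied to $g^2-g$) to the Dirichlet endpoint, which you then handle by the explicit decomposition
\begin{equation*}
\iota^\ast(Q^2 - Q)\iota \;=\; \iota^\ast\bigl[\,(\Lambda_2 Q\Lambda_2)^2 - \Lambda_2 Q\Lambda_2 \;+\; \Lambda_2 Q\Lambda_2^\perp Q\Lambda_2\,\bigr]\iota, \qquad Q := g(\Lambda_2 H\Lambda_2),
\end{equation*}
re-using the $t=1$ computation from \cref{lem:final BEC homotopy} for the first bracket and $\Lambda_2 Q\Lambda_2^\perp = \ii(\partial_2 Q)\Lambda_2^\perp$ for the second. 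Your decomposition checks out: $\iota^\ast Q^2\iota = \iota^\ast\Lambda_2 Q(\Lambda_2+\Lambda_2^\perp)Q\Lambda_2\iota$ is exactly what yields those two terms. The net effect is that your argument is more self-contained (it avoids citing the external confinement lemma and relies only on machinery already proved in the paper), at the modest cost of the extra continuity step and of having to verify the confinement estimate along the whole path rather than only at the $\hat H$ endpoint. Both arguments then terminate by \cref{eq:key lemma to show homotopies are Fredholm}, and both share the same (minor, and present already in the paper's statement) imprecision that the homotopy really lives in $\Fredholms{\hat{\HH}}$ rather than $\Fredholms{\HH}$; as you note, the transfer of the auxiliary propositions from $\HH$ to $\hat{\HH}$ is routine.
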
 

\begin{proof}
		  
	Consider the continuous map $M:[0,1] \to \BLO{\HH}$ given by $M(t) = \exp(-2\pi \ii A(t))$, with $A(t) = t g(\operatorname{Ad}_{\iota^\ast}H) + (1-t) g(\hat{H}) $. After some algebra, we have 
	\begin{align*}
		A(t)^2-A(t) &= (t^2 g(\operatorname{Ad}_{\iota^\ast}H) - t^2 g(\hat{H}) + t g(\hat{H})  + tg(\hat{H}) - \Id)\times( g(\operatorname{Ad}_{\iota^\ast} H) - g(\hat{H})) + \\
		& \quad+ g(\hat{H})^2 - g(\hat{H})\,.
	\end{align*}
	Now, $g(\operatorname{Ad}_{\iota^\ast}H) - g(\hat{H})$ is $\LOC{2}$ by the compatibility condition \cref{def:compatibility between bulk and edge} and application of \cref{prop:control of difference of operators under smooth function}. So by the algebraic closure of $\LOC{2}$, the first line is $\LOC{2}$. Finally, we note that $g^2-g$ is only supported within the bulk spectral gap (since outside of it it takes either the value $1$ or $0$, in each case $g^2-g$ is zero). Hence $(g^2-g)(\hat{H})$ is $\LOC{2}$ as well by \cite[Lemma A3 (iii)]{Elbau_Graf_2002} (we apply it with $G:=g^2-g$). 
	
	So by \cref{eq:key lemma to show homotopies are Fredholm} we conclude the continuous interpolation between both operators remains Fredholm.  If $\hat{H}$ and $H$ are TRI then in particular we have $[A(t), \Theta] = 0$ hence the interpolation passes within $\calF_\Theta(\HH)$.  
\end{proof}

\subsection{Equivalence of $\ZZ_2$ indices}
\label{sec:equivalence of z_2 indices}
In order to see that our definition of $\mathcal{N}$ is equivalent to the Fu-Kane-Mele \cite{Kane_Mele_2005_PhysRevLett.95.146802,Fu_Kane_2007} invariant, to the Schulz-Baldes \cite{Schulz-Baldes_2015_Z2} invariant, to the Katsura-Koma \cite{Katsura_Koma_2016} invariant, as well as to the Graf-Porta \cite{Graf2013}, as well as prove \cref{eq:equivalence of bulk z2 indices}, we construct a model with fermionic time reversal symmetry and then show that our definition of $\mathcal{N}$ agrees with the the Schulz-Baldes \cite{Schulz-Baldes_2015_Z2} invariant (which has been related to the other invariants already) on both the trivial and non-trivial classes.

Let $H$ be some Hamiltonian (not necessarily such that $[H,\Theta]=0$). On a double Hilbert space $\hat{\HH}:=\HH\oplus\HH$ define $\tilde{H} := H \oplus \Theta H \Theta^*$ and 
	\begin{align*}
		\tilde{\Theta} : = \begin{bmatrix}
			0 & \Theta \\
			\Theta & 0 
		\end{bmatrix}
	\end{align*}
	A calculation shows that $\tilde{\Theta}^2 = -\Id$ and $[\tilde{H}, \tilde{\Theta}] =0$. So we see that $\tilde{H}$, $\tilde{\Theta}$, $\HH\oplus \HH$ defines a model with fermionic time reversal symmetry. 
	From $\tilde{H}$ we may naturally define the Fermi projection  
	\begin{align*}
		\tilde{P} := \chi_{(-\infty, 0)}(\tilde{H}) = \begin{bmatrix}
			\chi_{(-\infty, 0)}(H) & 0 \\
			0 & \chi_{(-\infty, 0)}(\Theta H \Theta^*)
		\end{bmatrix}
	\end{align*}
	so that $\Theta$ being anti-unitary yields, via $R(z) = \Theta^\ast(\Theta H \Theta^\ast -\bar{z}\Id)^{-1}\Theta$ and Stone's formula for $\chi_{(-\infty, 0)}$,
	\begin{align*}
		\tilde{P} = \begin{bmatrix}
			P & 0 \\
			0 & \Theta P \Theta^*
		\end{bmatrix}\,.
	\end{align*}	
	We note also that $X_j$ on $\HH$  extends naturally to $\HH\oplus \HH$ as $\tilde{X_j} = X_j \oplus X_j$ and that since $[\Theta, X_j] = 0$ we have $[\tilde{\Theta}, \tilde{X_j}] = 0$.  We have a similar extension of $\Lambda_j$ on $\HH$ to $\tilde{\Lambda}_j$ on $\HH\oplus \HH$ as well.  
	      
Having constructed the model with fermionic time reversal symmetry the equivalence of our invariant to the known invariants is shown by the following proposition.
\begin{prop}\label{prop:Z2 index equivalence}
	For the model for fermionic TRI symmetry defined previously in this section we have 
	\begin{align}
		\findex_2\tilde{\mathbb{\Lambda}}_1\exp(-2\pi \ii \tilde{P}\tilde{\Lambda}_2\tilde{P}) = \findex_2\tilde{\mathbb{P}}\tilde{U}\label{eq:equivalence of our two bulk z2 indices}
	\end{align}	
	where $\tilde{U} = \exp(\ii \arg(\tilde{X}_1+\ii \tilde{X}_2))$ is the unitary operator implementing the gauge transformation associated with a flux insertion at the origin.   
\end{prop}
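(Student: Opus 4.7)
My plan is to exploit the block-diagonal structure of every relevant operator on the doubled space $\HH \oplus \HH$. Since $\tilde{H}$, $\tilde{P}$, $\tilde{U} = U \oplus U$, and $\tilde{\Lambda}_j = \Lambda_j \oplus \Lambda_j$ are all block-diagonal with respect to the two summands, both Fredholm operators appearing in \cref{eq:equivalence of our two bulk z2 indices} will decompose in the same way, and I will identify each as a direct sum of the form $A \oplus \Theta A^{\ast}\Theta^{\ast}$ for an appropriate $\ZZ$-valued Fredholm operator $A$ on $\HH$, then count kernel dimensions.

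For the right-hand side, since $\arg(X_1+\ii X_2)$ is a real-valued multiplication operator and thus commutes with $\Theta$, anti-linearity gives $\Theta U \Theta^{-1} = U^{\ast}$. A short calculation then produces
\begin{align*}
\tilde{\mathbb{P}}\tilde{U} \;=\; F \;\oplus\; \Theta F^{\ast}\Theta^{\ast}, \qquad F := \mathbb{P}U.
\end{align*}
Because $\Theta$ is anti-unitary (hence a real-linear bijection) it preserves kernel dimensions, so
\begin{align*}
\findex_2 \tilde{\mathbb{P}}\tilde{U} \;=\; \bigl(\dim \ker F + \dim \ker F^{\ast}\bigr) \bmod 2 \;=\; \findex F \bmod 2.
\end{align*}

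I will handle the left-hand side analogously. The key identity to derive is $\exp(-2\pi\ii\, \Theta A \Theta^{-1}) = \Theta \exp(2\pi\ii A)\Theta^{-1}$, which follows by expanding the power series and using that $\Theta$ complex-conjugates scalars. Applied to the self-adjoint operator $A = P\Lambda_2 P$, together with $\Theta \Lambda_j \Theta^{-1} = \Lambda_j$ (since $\Lambda_j$ is a real function of position), this yields
\begin{align*}
\tilde{\mathbb{\Lambda}}_1 \exp(-2\pi\ii\, \tilde{P}\tilde{\Lambda}_2\tilde{P}) \;=\; G \;\oplus\; \Theta G^{\ast}\Theta^{\ast}, \qquad G := \mathbb{W}_1(P\Lambda_2 P),
\end{align*}
so the same counting argument gives that the left-hand side of \cref{eq:equivalence of our two bulk z2 indices} equals $\findex G \bmod 2$.

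The proposition then follows immediately from the first statement of \cref{thm:equivalence of bulk indices}, which (without requiring time-reversal invariance) asserts $\findex F = \findex G$; reducing modulo $2$ gives the desired equality. I expect the main obstacle to be the careful bookkeeping of anti-linearity signs, in particular the derivation of $\exp(-2\pi\ii\,\Theta A \Theta^{-1}) = \Theta \exp(2\pi\ii A)\Theta^{-1}$ from the power series and the verification that the lower block of $\tilde{\mathbb{P}}\tilde{U}$ is really $\Theta F^{\ast}\Theta^{\ast}$ rather than $\Theta F \Theta^{\ast}$; one also needs to confirm that each direct summand is separately Fredholm so that kernel dimensions genuinely add, which holds since $\Theta$-conjugation preserves Fredholmness and the upper blocks $F$ and $G$ are Fredholm by construction.
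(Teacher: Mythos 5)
Your proposal is correct and follows essentially the same route as the paper: explicitly compute the block-diagonal form of each $\Theta$-odd Fredholm operator on $\HH\oplus\HH$ (upper block $F$, respectively $G=\mathbb{W}_1(P\Lambda_2P)$, and lower block $\Theta F^\ast\Theta^\ast$, respectively $\Theta G^\ast\Theta^\ast$, the latter being the paper's $\Theta\mathbb{W}_1(-P\Lambda_2P)\Theta^\ast$), then count kernel dimensions mod $2$ using that $\Theta$ is a bijection, and finally invoke the first statement of \cref{thm:equivalence of bulk indices} to conclude $\findex F=\findex G$. The sign-tracking you flag as the main obstacle ($\Theta F^\ast\Theta^\ast$ rather than $\Theta F\Theta^\ast$, and the flipped sign in the exponential) is indeed exactly the step the paper emphasizes, and you handle it correctly.
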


\begin{proof}
	Using the anti-unitary property of $\Theta$ and commutation relations discussed previously after some algebra we have
	\begin{align*}
		\tilde{\mathbb{\Lambda}}_1\exp(-2\pi \ii \tilde{P}\tilde{\Lambda}_2\tilde{P}) = \begin{bmatrix}
			\mathbb{W}_1( P \Lambda_2 P) & 0\\
			0 & \Theta\mathbb{W}_1( -P \Lambda_2 P)\Theta^*
		\end{bmatrix}
	\end{align*}
	We stress the different sign in second block. Then using the elementary fact that $\dim \ker(A\oplus B) = \dim\ker A+\dim \ker B$, the fact that $\Theta$ is a bijection (so it doesn't change dimensions of kernels), and the fact that $n+m\mod2=n-m\mod2$, we get 
	\begin{align*}
		\findex_2\tilde{\mathbb{\Lambda}}_1\exp(-2\pi \ii \tilde{P}\tilde{\Lambda}_2\tilde{P}) &= \dim\ker\mathbb{W}_1( P \Lambda_2 P)+\dim\ker\Theta\mathbb{W}_1( -P \Lambda_2 P)\Theta^\ast \mod2\\
		 &= \dim\ker\mathbb{W}_1( P \Lambda_2 P)+\dim\ker\mathbb{W}_1( -P \Lambda_2 P)\mod2\\
		 &= \dim\ker\mathbb{W}_1( P \Lambda_2 P)-\dim\ker\mathbb{W}_1( -P \Lambda_2 P)\mod2\\
		 &\equiv \findex\mathbb{W}_1( P \Lambda_2 P)\mod 2
	\end{align*}
	where one notes that the last line is the \emph{usual} Fredholm index. Then by the first statement of \cref{thm:equivalence of bulk indices} we may write 
	\begin{align*}
		\findex_2\tilde{\mathbb{\Lambda}}_1\exp(-2\pi \ii \tilde{P}\tilde{\Lambda}_2\tilde{P}) =  \findex\mathbb{P}U \mod 2\,.
	\end{align*} 
	Again using the anti-unitary property of $\Theta$, commutation relations, and the observation that $U$ satisfies \cref{eq:the theta-odd constraint} we \emph{also} have 
	\begin{align*}
		\tilde{\mathbb{P}}\tilde{U} = \begin{bmatrix}
			\mathbb{P}U & 0\\
			0 & \Theta\mathbb{P}U^\ast\Theta^*
		\end{bmatrix}
	\end{align*}
	so that following the same procedure, we obtain \begin{align*}
		\findex_2\tilde{\mathbb{P}}\tilde{U} &= \findex\mathbb{P}U\mod2
	\end{align*}
	and so \cref{eq:equivalence of our two bulk z2 indices}. 
\end{proof}

In conclusion, since all the pre-existing $\ZZ_2$ indices are known to also admit such a direct sum decomposition relating them to the Chern number, our index agrees with them.

\subsection{$\Theta$-odd Fredholm theory}\label{subsec:theta-odd Fredholm theory}
For convenience of the reader, we include here a repetition of some of the $\ZZ_2$ Fredholm theory phrased via the $\Theta$-odd constraint \cref{eq:the theta-odd constraint}. These proofs first appeared in \cite{Atiyah1969} under the guise of \emph{skew-adjoint} Fredholm theory and then new proofs were presented in \cite{Schulz-Baldes_2015_Z2} for \emph{odd-symmetric} operators. In what follows, $\mathcal{B}_\Theta(\HH),\mathcal{K}_\Theta(\HH)$ are the bounded linear operators and respectively compact operators obeying \cref{eq:the theta-odd constraint}.

\begin{thm}\label{thm:continuity of index2}
	$\findex_2$ is stable under norm continuous perturbations obeying \cref{eq:the theta-odd constraint}.     	
\end{thm}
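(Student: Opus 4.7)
I would prove that $\findex_2$ is \emph{locally constant} on $\calF_\Theta(\HH)$ in the norm topology, from which constancy along any norm-continuous $\Theta$-odd Fredholm path follows by connectedness of the parameter interval. The set $\calF_\Theta(\HH)$ is itself open in $\mathcal{B}_\Theta(\HH)$: the ordinary Fredholm set $\Fredholms{\HH}$ is open in $\BLO{\HH}$ by the parametrix characterization, and the $\Theta$-odd condition \cref{eq:the theta-odd constraint} is norm-closed, so intersecting does the job. Fix $A_0\in \calF_\Theta(\HH)$ and choose $\delta>0$ so that every $A\in\mathcal{B}_\Theta(\HH)$ with $\norm{A-A_0}<\delta$ remains $\Theta$-odd Fredholm.

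The central ingredient is a \emph{Kramers degeneracy}: every strictly positive eigenvalue of $A^*A$ has even multiplicity for any $A\in\calF_\Theta(\HH)$. From \cref{eq:the theta-odd constraint} combined with $\Theta^2=-\Id$ one extracts the intertwiner $\Theta A^* = A\Theta$, whence $\Theta(A^*A)=(AA^*)\Theta$ and $A\Theta^{-1}=-\Theta A^*$. For $\lambda>0$ set $V_\lambda(A) := \ker(A^*A-\lambda\Id)$, which is finite-dimensional. The anti-linear map
\[
\Phi_\lambda := \lambda^{-1/2}\,\Theta^{-1} A \colon V_\lambda(A)\longrightarrow V_\lambda(A)
\]
is well-defined since $A$ sends $V_\lambda(A)$ into the $\lambda$-eigenspace of $AA^*$, which $\Theta^{-1}$ returns to $V_\lambda(A)$; a direct calculation using $A\Theta^{-1}=-\Theta A^*$ yields $\Phi_\lambda^2 = -\lambda^{-1}A^*A|_{V_\lambda(A)} = -\Id$. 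Hence $V_\lambda(A)$ carries a quaternionic structure and $\dim_{\CC} V_\lambda(A)$ is even.

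To finish I would assemble the two steps via a spectral projection argument. Pick $\varepsilon>0$ with $(0,\varepsilon]\cap\sigma(A_0^*A_0)=\emptyset$; such $\varepsilon$ exists because $A_0$ Fredholm forces $0$ to be isolated in $\sigma(A_0^*A_0)$ with finite multiplicity. Shrinking $\delta$ if necessary, for $\norm{A-A_0}<\delta$ the spectral projection $P(A) := \chi_{[0,\varepsilon)}(A^*A)$ is expressible as a Riesz integral over a fixed contour encircling this spectral island, and is therefore norm-continuous in $A$ with finite rank equal to $\dim\ker A_0$. Decomposing $\im P(A) = \ker A \oplus \bigoplus_{0<\lambda<\varepsilon} V_\lambda(A)$ and invoking Kramers degeneracy yields
\[
\dim\ker A = \dim\ker A_0 - \sum_{0<\lambda<\varepsilon}\dim V_\lambda(A)\equiv \dim\ker A_0 \pmod{2},
\]
i.e.\ $\findex_2 A = \findex_2 A_0$. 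I expect the only genuine subtlety to be the verification that $\Phi_\lambda^2=-\Id$; everything else is standard Riesz functional calculus applied on a norm neighborhood of $A_0$.
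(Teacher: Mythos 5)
Your proof is correct, but it follows a genuinely different route from the paper's. The paper performs an LDU (Schur complement) reduction: writing $T$ and $\Theta$ in block form with respect to $\ker(T)^\perp\oplus\ker T$ (and $\im T\oplus\coker T$), it identifies $\ker(T+S)$ with the kernel of a finite-dimensional Schur complement $A := -S_{21}(T_{11}+S_{11})^{-1}S_{12}+S_{22}$ that is itself $\Theta$-odd, and then concludes via \cref{lem: even dimensional image}, which asserts that a finite-dimensional $\Theta$-odd operator has even rank. You instead exploit the gap of $A_0^\ast A_0$ at zero together with norm-continuity (and local constancy of rank) of the Riesz projection $P(A)=\chi_{[0,\varepsilon)}(A^\ast A)$, and your Kramers lemma acts directly on the positive eigenspaces $V_\lambda(A)$ of $A^\ast A$, where the anti-linear map $\Phi_\lambda=\lambda^{-1/2}\Theta^{-1}A$ squares to $-\Id$ and forces even multiplicity. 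The intertwining identities you use ($\Theta A=A^\ast\Theta$, $A\Theta^{-1}=-\Theta A^\ast$, hence $(\Theta^{-1}A)^2=-A^\ast A$) do check out, and the bookkeeping $\dim\ker A + \sum_\lambda \dim V_\lambda(A)=\rank P(A)=\dim\ker A_0$ is right. Both approaches must at some point invoke a Kramers-type even-dimensionality; the paper encodes it in the image of the Schur complement, whereas you encode it in the nonzero spectral subspaces of $A^\ast A$. The paper's route is purely algebraic (no spectral projections needed beyond the original setup), and yields a concrete finite-dimensional operator governing the kernel; your route is closer to the spectral-flow picture and makes the even degeneracy of singular values explicit, which is perhaps more transparent. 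One point you should spell out slightly more: the contour in the Riesz integral must be chosen once so that $\sigma(A^\ast A)$ avoids it for all $A$ in the $\delta$-neighborhood (upper semicontinuity of the spectrum takes care of this), and the finiteness of the decomposition of $\im P(A)$ into eigenspaces follows from the finite rank of $P(A)$.
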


\begin{proof}
	Let $T\in \calF_\Theta(\HH)$.  Then we may make the decompositions $\HH = \ker(T)^\perp\oplus \ker T$ and $\HH = \im T\oplus \coker T$ by $\im T\in\Closed{\HH}$.  With respect to this decomposition we may write  
	\begin{align*}
		T = \begin{bmatrix}
			T_{11} & 0 \\
			0 & 0 
		\end{bmatrix}\,,\quad \Theta = \begin{bmatrix}
			\Theta_{11} & 0 \\
			0 & \Theta_{22}
		\end{bmatrix}
	\end{align*} 
	where $T_{11}:\ker(T)^\perp \to \im T$ is an isomorphism. We note that $\Theta$ is diagonal in this decomposition due to $T=-\Theta T^\ast \Theta$. Furthermore, $\Theta_{11},\Theta_{22}$ are both anti-unitary operators such that $\Theta_{11}^2 = \Theta_{22}^2 = -\Id$ due to the corresponding properties of $\Theta$. 
	
	Now let $S\in \mathcal{B}_\Theta(\HH)$ be given with $S_{ij}$ for $1\leq i,j\leq 2$ denoting the blocks of $S$ with respect to the stated decomposition of $\HH$. For $\norm{S}$ is sufficiently small we have $T_{11}+S_{11}$ invertible. Define $A := -S_{21}(T_{11}+S_{11})^{-1}S_{12}+S_{22}$. Since $A$ is a linear map from $\ker T \to \coker T$ and $T$ is Fredholm (so it has finite kernel), by the rank nullity theorem we have $\dim\ker T=\dim\im A+\dim\ker A$. Performing an LDU-decomposition for $T+S$, we find  $$T+S = I_1 ((T_{11}+S_{11})\oplus A) I_2$$ where $I_1,I_2$ are invertible. Since $T_{11}+S_{11}$ is invertible, we conclude $\ker(T+S)=\ker A$ so that
	\begin{align*}
		\dim \ker(T+S) + \dim \im A= \dim \ker T\,.
	\end{align*} 
	Then as $A$ is $\Theta_{22}$-odd and it is a map between finite $\ker T\to\coker T\cong\ker T$ (as $\findex T=0$) we may use \cref{lem: even dimensional image} to conclude that $\dim \ker(T+S) = \dim \ker T \mod 2$, i.e., $\findex_2(T+S) = \findex_2 T$.  
	\end{proof}
	
\begin{lem}\label{lem: even dimensional image}
	For $A\in \mathcal{B}_{\Theta}(\mathcal{V})$, if $\mathcal{V}$ is finite dimensional then $\dim\im A$ is even.
\end{lem}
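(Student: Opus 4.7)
The plan is to build, out of the time-reversal $\Theta$, a non-degenerate $\CC$-bilinear anti-symmetric form $\omega$ on $\mathcal{V}$, and then to twist it by $A$ to obtain a second anti-symmetric bilinear form $\tilde\omega$ whose radical is precisely $\ker A$. A non-degenerate anti-symmetric bilinear form on a finite-dimensional complex space lives on an even-dimensional space, so the quotient $\mathcal{V}/\ker A$ is forced to have even dimension; by rank-nullity this equals $\dim \im A$.

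First, define $\omega(u,v) := \ip{\Theta u}{v}$. Antilinearity of $\Theta$ combined with conjugate-linearity of $\ip{\cdot}{\cdot}$ in its first slot makes $\omega$ $\CC$-linear in $u$; linearity in $v$ is immediate. Since $\Theta$ is anti-unitary with $\Theta^2 = -\Id$, we have $\Theta^\ast = \Theta^{-1} = -\Theta$, and a short computation using this together with sesquilinearity yields the chain $\ip{\Theta u}{v} = -\overline{\ip{u}{\Theta v}} = -\ip{\Theta v}{u}$, so $\omega$ is anti-symmetric. Non-degeneracy is immediate from invertibility of $\Theta$.

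Next define $\tilde\omega(u,v) := \omega(u, Av) = \ip{\Theta u}{Av}$, which is $\CC$-bilinear. Applying $\Theta$ on the left to the $\Theta$-odd identity $A = -\Theta A^\ast \Theta$, and using $\Theta^2 = -\Id$ together with the fact that $\Theta$ commutes with real scalars, we obtain $\Theta A = A^\ast \Theta$ as operators on $\mathcal{V}$. Combining with the anti-symmetry of $\omega$:
\begin{align*}
\tilde\omega(v,u) = \omega(v,Au) = -\omega(Au,v) = -\ip{\Theta Au}{v} = -\ip{A^\ast \Theta u}{v} = -\ip{\Theta u}{Av} = -\tilde\omega(u,v),
\end{align*}
so $\tilde\omega$ is also anti-symmetric. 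Its radical is $\{v : \tilde\omega(u,v) = 0 \text{ for all } u \in \mathcal{V}\} = \{v : Av = 0\} = \ker A$, using non-degeneracy of $\omega$. Therefore $\tilde\omega$ descends to a non-degenerate anti-symmetric bilinear form on $\mathcal{V}/\ker A$, whose dimension must then be even; by rank-nullity this dimension is $\dim \im A$, completing the proof.

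The only step that genuinely requires care is the derivation of the operator identity $\Theta A = A^\ast \Theta$ from the $\Theta$-odd condition while tracking the antilinearity of $\Theta$; everything else is formal manipulation of bilinear forms. A completely concrete alternative would be to choose a basis in which $\Theta v = J\bar v$ with $J$ the standard symplectic matrix, rewrite the $\Theta$-odd condition as $AJ = JA^T$, observe that $(AJ)^T = -AJ$ is then a complex anti-symmetric matrix and hence has even rank, and conclude $\dim \im A = \mathrm{rank}(AJ)$ is even—but the invariant bilinear-form approach above is cleaner.
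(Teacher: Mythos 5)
Your proof is correct and rests on the same mechanism as the paper's: the $\Theta$-odd condition yields a non-degenerate alternating $\CC$-bilinear form on a space of dimension $\dim\im A$, and such a form forces even dimension. The paper phrases this more informally, working directly on $\im A$ via the anti-linear pairing $\psi\mapsto A\Theta\psi$ and the identity $\langle A\Theta\psi,\psi\rangle=0$; you instead construct the symplectic form $\omega(u,v)=\langle\Theta u,v\rangle$ on $\mathcal{V}$, twist it to $\tilde\omega(u,v)=\langle\Theta u,Av\rangle$ using $\Theta A=A^\ast\Theta$, and pass to $\mathcal{V}/\ker A$. The two arguments differ only in bookkeeping, with yours making the appeal to symplectic linear algebra explicit.
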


\begin{proof}
	Let $\psi \in \im A$.  Consider $A\Theta \psi \in \im A$.  Noting $A$ is $\Theta$-odd and $\Theta$ is anti-unitary a calculation shows $\langle A\Theta \psi, \psi \rangle =0$.  As $\im A \cong \ker(A)^\perp$ and $\Theta$ is invertible we conclude that each element of $\im A$ is paired with a distinct orthogonal element.  Hence, $\dim \im A$ is even.       
\end{proof}

\begin{thm}\label{thm:compact perturbation of ind2}
	$\findex_2$ is stable under perturbations in $\mathcal{K}_\Theta(\HH)$.
\end{thm}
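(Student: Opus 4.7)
The natural approach is to reduce compact perturbation stability to the already-established norm-continuous perturbation stability (\cref{thm:continuity of index2}) via a straight-line homotopy. Given $T\in\calF_\Theta(\HH)$ and $K\in\mathcal{K}_\Theta(\HH)$, I would consider the path
\begin{align*}
[0,1]\ni s\mapsto T_s := T+sK\in\BLO{\HH}\,.
\end{align*}
The plan is to verify that this path stays inside $\calF_\Theta(\HH)$, is norm continuous, and therefore has constant $\findex_2$ by a standard connectedness argument.

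First I would check the three properties of the path. Norm continuity is immediate since $\|T_s-T_{s'}\|\leq|s-s'|\,\|K\|$. The $\Theta$-odd condition \cref{eq:the theta-odd constraint} is preserved by real-linear combinations of operators satisfying it, so $T_s\in\mathcal{B}_\Theta(\HH)$ for each $s$. The Fredholm property for each $T_s$ is the only non-trivial point: here I would invoke the classical fact that compact perturbations of Fredholm operators remain Fredholm (which can be proven directly by noting that if $B$ is a parametrix for $T$, then $B$ is also a parametrix for $T+sK$ since $B(T+sK)-\Id=BT-\Id+sBK$ is compact, and similarly on the other side). Thus $T_s\in\calF_\Theta(\HH)$ for every $s\in[0,1]$.

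Given these three properties, \cref{thm:continuity of index2} applied at each point $s_0\in[0,1]$ supplies an open neighborhood of $s_0$ on which $\findex_2(T_s)=\findex_2(T_{s_0})$. Hence the map $s\mapsto\findex_2(T_s)\in\ZZ_2$ is locally constant on the connected interval $[0,1]$, and is therefore globally constant. Evaluating at the endpoints yields $\findex_2(T+K)=\findex_2(T)$, as desired.

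The only step that requires any care is the Fredholmness of $T_s$ along the path; everything else is routine. Since this step is a classical consequence of Atkinson's characterization and the two-sided ideal property of $\Compacts{\HH}$ (already used in the proof of \cref{prop:main criterion for Fredholm of truncated ops}), I do not expect any serious obstacle.
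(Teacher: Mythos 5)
Your proposal is correct and follows essentially the same route as the paper: use Atkinson's theorem to see the straight-line path $t\mapsto T+tK$ stays in $\calF_\Theta(\HH)$, then invoke \cref{thm:continuity of index2} together with connectedness of $[0,1]$. The extra details you spell out (the parametrix argument, the $\Theta$-odd condition being preserved along the path) are just unpackings of what the paper leaves implicit.
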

\begin{proof}
	The fact that $\calF_\Theta(\HH)+\mathcal{K}_\Theta(\HH)\subseteq \calF_\Theta(\HH)$ follows from Atkinson's theorem \cite{Booss_Topology_and_Analysis}. Then given $T\in \calF_\Theta(\HH)$ and $K\in \mathcal{K}_\Theta(\HH)$ consider the norm continuous map $[0,1] \ni t \mapsto T+tK$.  This map passes within $\calF_\Theta(\HH)$ by the first statement.  Hence, by \cref{thm:continuity of index2} the result follows.   
\end{proof}

\begingroup
\let\itshape\upshape
\printbibliography
\endgroup
\end{document}